\colorlet{shadecolor}{gray!12}
\newcommand{\RR}{{\mathbb R}}
\newenvironment{desctight}
  {\begin{list}{}{\setlength\labelwidth{0pt}
        \setlength{\itemsep}{0.5pt}
        \setlength{\parsep}{0pt}
        \setlength\itemindent{-\leftmargin}
        }}
    {\end{list}}
\newtheorem{theorem}{Theorem}[section]
\newtheorem{example}[theorem]{Example}
\newtheorem{corollary}[theorem]{Corollary}
\newtheorem{lemma}[theorem]{Lemma}
\newtheorem{definition}[theorem]{Definition}
\newtheorem{remark}[theorem]{Remark}
\newenvironment{proof}{\noindent{\bf Proof~}}{\null\hfill $\Box$\par\medskip}
\newcommand{\MSOA}{\text{MSO}_1}
\newcommand{\w} {\mbox{W}}
\newcommand{\s} {\mathfrak{s}}
\begin{document}


\title{Computing Directed Steiner Path Covers\thanks{A short version of 
this paper appeared in the Proceedings of the {\em 46th International Conference on Current Trends in Theory and Practice of Computer Science} (SOFSEM 2020), see \cite{GHKRRW20}.}}

\author[1]{Frank Gurski}
\author[1]{Dominique Komander}
\author[1]{Carolin Rehs}

\author[2]{Jochen Rethmann}

\author[1]{Egon Wanke}
\affil[1]{\small University of  D\"usseldorf,
Institute of Computer Science, 
40225 D\"usseldorf, Germany}

\affil[2]{\small Niederrhein University of Applied Sciences,
Faculty of Electrical Engineering and Computer Science,
47805 Krefeld, Germany}

\maketitle


\begin{abstract}
In this article we consider the {\sc Directed Steiner Path Cover} problem
on directed co-graphs. Given a directed graph $G=(V,E)$ and a set
$T \subseteq V$ of so-called terminal vertices, the problem is to find
a minimum number of vertex-disjoint simple directed paths, which contain all
terminal vertices and a minimum number of non-terminal vertices (Steiner
vertices).  The primary minimization criteria is the number of paths.
We show how to compute in linear time a minimum Steiner path cover for
directed co-graphs. This leads to a linear time computation of an
optimal directed Steiner path on directed co-graphs, if it exists.  
Since the Steiner path problem generalizes the Hamiltonian path problem,
our results imply the first linear time algorithm for the 
directed Hamiltonian path problem on directed co-graphs.
We also give binary integer programs for the (directed) Hamiltonian
path problem, for the (directed) Steiner path problem, and for the
(directed) Steiner path cover problem. These integer programs can be
used to minimize change-over times in pick-and-place machines used
by companies in electronic industry.

\bigskip
\noindent
{\bf Keywords:} 
binary integer program; combinatorial optimization;
   directed co-graphs; directed Steiner path cover problem;
   directed Steiner path problem; directed Hamiltonian path problem;
   pick-and-place machines
\end{abstract}


\section{Introduction}

For the well known Steiner tree problem there are efficient algorithms
on special graph classes like series-parallel graphs \cite{WC83},
outerplanar graphs \cite{WC82} and graphs of bounded
tree-width \cite{BCKN15,CMZ12}.
The class Steiner tree problem (CSP) is a generalization of the 
Steiner tree problem in which the vertices are partitioned into
classes of terminals \cite{RW90}. The unit-weight version of CSP can be
solved in linear time on co-graphs \cite{WY95}.

The Steiner path problem is a restriction of the Steiner
tree problem such that the required terminal vertices lie
on a path of minimum cost.
It is also a generalization of the Hamiltonian path problem:
If we choose each vertex as a terminal vertex, the Steiner path
problem becomes the Hamiltonian path problem.
The Euclidean bottleneck Steiner path problem
was considered in \cite{ACKS14} and a linear time solution for the
Steiner path problem on trees was given in \cite{MJV13}. 

While a Steiner tree always exists within connected graphs,
it is not always possible to find a Steiner path, which motivates
us to consider Steiner path cover problems. 
The Steiner path cover problem on interval graphs was considered
in \cite{CL18}.

In this article we consider the directed Steiner path cover problem
defined as follows. Let $G$ be a directed graph on vertex set $V(G)$ and
edge set $E(G)$ and let $T \subseteq V(G)$ be a set of terminal vertices.
Let $c: E(G) \to \RR^{\geq 0}$ be a function that assigns a weight to
each edge.
A {\em directed Steiner path cover} for $G$ is a set $P$ of
vertex-disjoint simple directed paths in $G$ that contain all
terminal vertices of $T$ and possibly also some of the non-terminal
(Steiner) vertices of $V(G) - T$.
The {\em size} of a directed Steiner path cover is the number of its paths,
i.e.~the size is $|P|$,
the {\em cost} is defined as the sum of weights of those edges used in
the paths in a directed Steiner path cover of minimum size.

\begin{desctight}
\item[Name] Directed Steiner Path Cover

\item[Instance] A directed graph $G$,
a set of terminal vertices $T \subseteq V(G)$,
and edge weights $c:  E(G) \to \RR^{\geq 0}$.

\item[Task] Find a directed Steiner path cover $P$ of minimum size 
for $G$ that minimizes $\sum_{p\in P}\sum_{e\in p} c(e)$.
\end{desctight}

To minimize only the sum of the edge weights makes no sense, because this
sum becomes minimal, if each terminal vertex is chosen as path of length 0.
So we primary have to demand that the number of paths should be minimal.

\smallskip
The directed Steiner path cover problem is NP-hard since it
generalizes the directed Hamiltonian path problem.
This motivates us to restrict the problem to special inputs. 
We consider a very natural class of inputs, which is defined as
follows.
Directed co-graphs (short for complement reducible graphs)  
can be generated from the single vertex graph by applying
disjoint union, order composition and series composition \cite{BGR97}.
They also can be characterized by excluding eight forbidden induced 
sub-digraphs, see \cite[Figure 2]{CP06}.

Directed co-graphs are exactly the digraphs of
directed NLC-width 1 and a proper subset of the digraphs
of directed clique-width at most 2 \cite{BG18,GWY16}.
Directed co-graphs are also interesting from an algorithmic point of view 
since several hard graph problems can be solved in 
polynomial time  by dynamic programming along the tree structure of
the input graph, see \cite{BM14,Gur17a,GR18c,GKR19f,GKR19d,GKR20c,GKR21,GKR21a}. 
Moreover, directed co-graphs are very useful for the reconstruction
of the evolutionary history of genes or species using genomic
sequence data \cite{HSW17,NEMWH18}.

For graphs where all edges have the same weight, the above definition
of the directed Steiner path cover problem results in a minimum
number of Steiner vertices. Graphs without edge weights can be considered
as a special case of graphs with unit-edge weights. Since edge weights do
not occur in co-graphs, we use the following problem definition.

\begin{desctight}
\item[Name] Unit-Edge-Weight Directed Steiner Path Cover

\item[Instance] A directed graph $G$ and
a set of terminal vertices $T \subseteq V(G)$.

\item[Task] Find a directed Steiner path cover of minimum size for $G$
such that the number of Steiner vertices is minimal.
\end{desctight}

In this paper we show how the value of a directed Steiner path cover of
minimum size and cost for the disjoint union, order composition and series
composition of two digraphs can be computed in linear time from the
corresponding values of the involved digraphs.  Therefore,
we define a useful  normal form for directed Steiner path covers in 
digraphs which are defined by the order composition or series 
composition of two digraphs. 
Further we give an algorithm which constructs a directed Steiner path cover
of minimum  size and  cost  for a directed co-graph in linear time.

The (unit-weight) directed Steiner tree problem  
is $\MSOA$-definable by \cite[Proposition 4.14]{GHKLOR14}. 
But this does not hold for the directed Steiner path (cover) problem, since
it is a generalization of the directed Hamiltonian path problem which 
is not $\MSOA$-definable.  Thus it is not possible to obtain our solutions
for the directed Steiner path (cover) problem using the fact that directed
co-graphs have bounded directed clique-width \cite{GWY16} and
the result from \cite{GHKLOR14}
stating that all $\MSOA$-definable  digraph problems
are fixed parameter tractable  for the parameter directed clique-width.
Since the Hamiltonian cycle problem 
is $\w[1]$-hard parameterized by clique-width \cite{FGLS10a}
it even follows that Hamiltonian path problem and thus the
(directed) Steiner path (cover) problem are $\w[1]$-hard when parameterized
by (directed) clique-width.

We also give binary integer programs for the directed Hamiltonian path
problem and for the directed Steiner path (cover) problem.
These integer programs can be used to minimize change-over times in
pick-and-place machines used in electronic industry. The problem of
minimizing change-over times is introduced in section \ref{sec_hp}.

%
%

\begin{desctight}
\item[Name] Directed Steiner Path

\item[Instance] A directed graph $G$, a set $T \subset V(G)$
  of terminal vertices, and a function $c: E(G) \to \RR^{\geq 0}$ that
  assigns each edge some weight.

\item[Task] Find a directed Steiner path $p$ in graph $G$ that minimizes
  $\sum_{e \in p} c(e)$.
\end{desctight}

\section{Preliminaries} \label{secsolDiCoGraph}

We consider the directed Steiner path cover problem on directed co-graphs,
so we first recall the definition of directed co-graphs and we will consider
a normal form of directed Steiner path covers.

\subsection{Directed Co-Graphs}\label{secDiCoGraph}

Directed co-graphs 
have been introduced  by Bechet et al.\ in   \cite{BGR97}.

\begin{definition}
The class of {\em directed co-graphs} is recursively defined as follows.
\begin{enumerate}[(i)]
\item Every digraph on a single vertex $(\{v\},\emptyset)$,
  denoted by $\bullet_v$, is a  {\em directed co-graph}.

\item If  $A$, $B$  are vertex-disjoint directed co-graphs, then
  \begin{enumerate}
  \item the disjoint union $A \oplus B$, which is defined as the digraph
    with vertex set $V(A) \cup V(B)$ and edge set $E(A) \cup E(B)$,

  \item the  order composition  $A \oslash B$, defined by their disjoint
    union plus all possible edges only directed from $V(A)$ to $V(B)$, and

  \item the series composition $A \otimes B$, defined by their disjoint
    union plus all possible edges between $V(A)$ and $V(B)$ in both
    directions,
  \end{enumerate}
  are {\em directed co-graphs}.
\end{enumerate}
\end{definition}

Every expression using these operations is called a {\em directed
co-expression}. The recursive generation of a directed co-graph can
be described by a tree structure, called {\em directed co-tree}.
The leaves of the directed co-tree represent the
vertices of the digraph and the inner vertices of the directed co-tree
correspond to the operations applied on the subgraphs of $G$ defined
by the subtrees. For every directed co-graph one can construct
a directed co-tree in linear time, see \cite{CP06}.

Directed co-graphs can also be characterized by excluding eight forbidden
induced sub-digraphs, see \cite[Figure 2]{CP06} or Figure~\ref{F-co-ex}.

\begin{table}[hbtp]
\begin{center}
\begin{tabular}{cccc}
\epsfig{figure=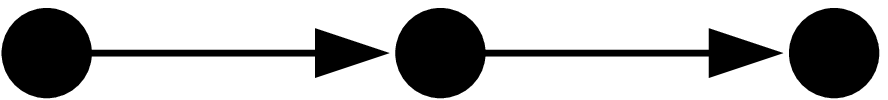,width=1.9cm} &
  \epsfig{figure=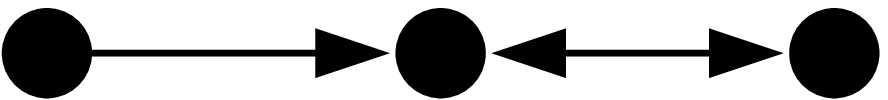,width=1.9cm} &
  \epsfig{figure=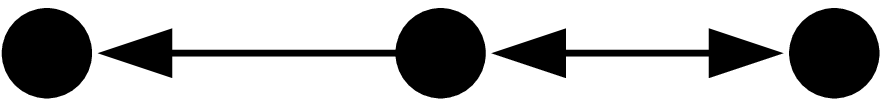,width=1.9cm} &
  \epsfig{figure=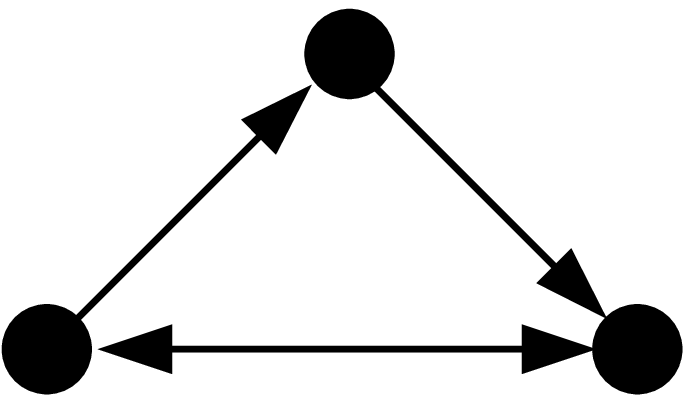,width=1.5cm} \\
 $D_1$    &    $D_2$    &   $D_3$   &  $D_4$  \\[1ex]
\epsfig{figure=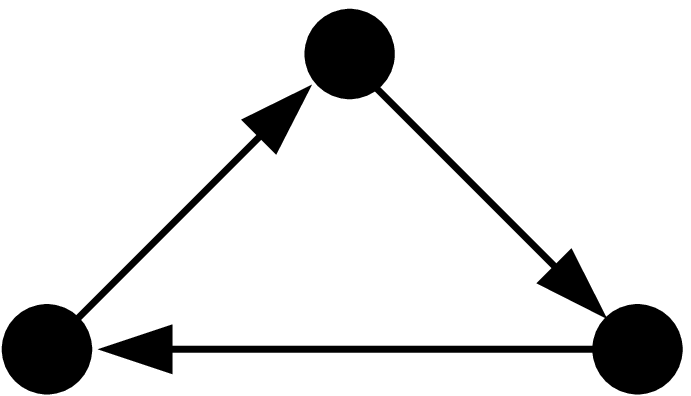,width=1.5cm} &
  \epsfig{figure=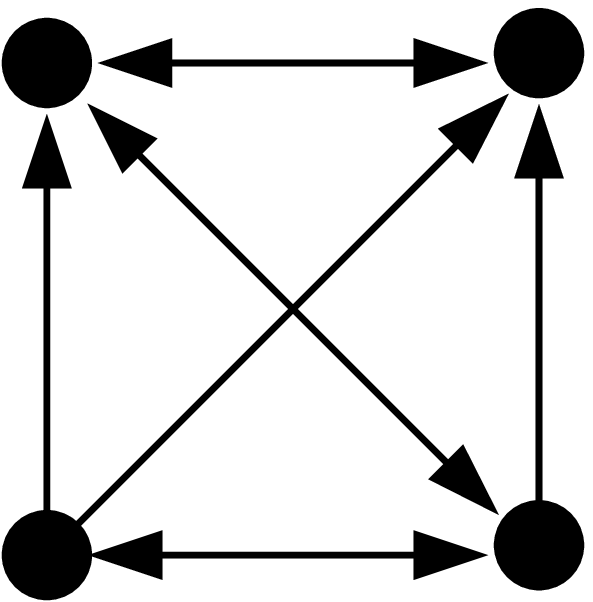,width=1.3cm} &
  \epsfig{figure=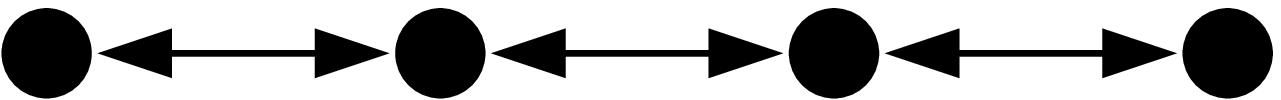,width=2.7cm} &
  \epsfig{figure=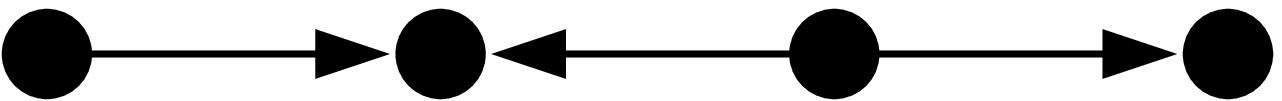,width=2.7cm} \\
 $D_5$   & $D_6$   &    $D_7$   &  $D_8$ \\
\end{tabular}
\end{center}
\caption{The eight forbidden induced sub-digraphs for directed co-graphs
  (see \cite{CP06}).}
\label{F-co-ex}
\end{table}

Next we define a normal form for directed Steiner path covers in 
digraphs which are defined by the order composition or series 
composition of two digraphs.

\subsection{Normal Form for  Directed Steiner Path Covers}

Let $G$ be a directed co-graph, let $T \subseteq V(G)$ be a set of terminal
vertices, and let $C$ be a directed Steiner path cover for $G$ with respect
to $T$. Then $\s(C)$ denotes the number of Steiner vertices in the paths
of $C$.

\begin{lemma}
\label{L_weakNF}
Let $C$ be a directed Steiner path cover for some directed co-graph
$G = A \oslash B$ or $G = A \otimes B$ with respect to a set
$T \subseteq V(G)$ of terminal vertices. Then there is a directed Steiner
path cover $C'$ with respect to $T$ which does not contain paths $p$ and
$p'$ satisfying one of the structures (\ref{L1_1})-(\ref{L1_4}), such
that $|C| \geq |C'|$ and $\s(C)  \geq \s(C')$ applies.
Let $q_1,\ldots,q_4$ denote sub-paths which may be empty.
\begin{enumerate}
\item \label{L1_1} 
  $p=(x, q_1)$ or $p=(q_1,x)$  where $x \not\in T$. Comment: No path
  starts or ends with a Steiner vertex.
\item \label{L1_2} 
  $p=(q_1, u, x, v, q_2)$ where $u \in V(A)$, $v \in V(B)$, and
  $x \not\in T$. Comment: On a path, the neighbors $u,v$ of a Steiner
  vertex $x$ are both contained in the same digraph.
\item \label{L1_3} 
  $p=(q_1,x)$, $p'=(u, q_2)$, where $x \in V(A)$, $u \in V(B)$,
  $p \neq p'$. Comment: No path $p$ ends in $A$, if there
  is a path $p' \neq p$ that starts in $B$.
\item \label{L1_4} 
  $p=(\ldots, x, u, v, y, \ldots)$ where $u,v \not\in T$.
  Comment: The paths contain no edge between two Steiner vertices.
\end{enumerate}
If $G = A \otimes B$ then cover $C'$ also does not contain paths
satisfying structures (\ref{L1_5})-(\ref{L1_8}).
\begin{enumerate}
\setcounter{enumi}{4}
\item \label{L1_5} 
  $p=(x, q_1)$, $p'=(u,q_2)$, where $x \in V(A)$, $u \in V(B)$,
  $p \neq p'$. Comment: All paths start in the same digraph.
\item \label{L1_6} 
  $p=(q_1, x, y, q_2)$, $p'=(q_3, u, v, q_4)$ where $x,y \in V(A)$,
  $u, v \in V(B)$. Comment: The cover $C'$ contains edges of only one
  of the digraphs.
\item \label{L1_7} 
  $p=(x, q_1)$, $p'=(q_2, u, y, v, q_3)$, where $x,y \in V(A)$,
  $u, v \in V(B)$, and $y \not\in T$. Comment: If a path starts in $A$
  then there is no Steiner vertex in $A$ with two neighbors on the path
  in $B$.
\item \label{L1_8} 
  $p=(x, q_1)$, $p'=(q_2, u, v, q_3)$, where $x \in V(A)$ and
  $u, v \in V(B)$. Comment: If a path starts in $A$, then no edge of $B$
  is contained in the cover.
\end{enumerate}
\end{lemma}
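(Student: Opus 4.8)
The plan is to bring $C$ into normal form by a sequence of local rewriting steps, one tailored to each forbidden configuration, and to control the whole process with the termination measure $\Phi(C) = |C| + \s(C)$. The engine behind every step is the cross-edge structure of the top-level composition, which I would record first: in $G = A \oslash B$ every vertex of $V(A)$ has an edge to every vertex of $V(B)$ and none in the reverse direction, so every simple directed path splits into a prefix lying entirely in $V(A)$ followed by a suffix lying entirely in $V(B)$; in $G = A \otimes B$ the same holds except that crossings are available in both directions and every $A$--$B$ pair is joined by an edge each way. Each rewrite will exploit these guaranteed cross edges either to delete a Steiner vertex or to splice two paths together. I would verify for each rewrite that it leaves $|C|$ and $\s(C)$ individually non-increasing while strictly decreasing at least one of them, so that $\Phi$ strictly decreases; since $\Phi \ge 1$, the process terminates, the final cover $C'$ avoids all listed structures, and because no single step ever increases $|C|$ or $\s(C)$ we obtain $|C| \ge |C'|$ and $\s(C) \ge \s(C')$ simultaneously.

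For the configurations common to both compositions the rewrites are direct. In case~(\ref{L1_1}) a path begins or ends with a Steiner vertex, which I delete, lowering $\s(C)$ and keeping $|C|$. In case~(\ref{L1_2}) the two path-neighbours $u \in V(A)$ and $v \in V(B)$ of a Steiner vertex $x$ lie in different parts, so the cross edge $u \to v$ lets me bypass and remove $x$, again lowering $\s(C)$. In case~(\ref{L1_3}) a path $p$ ends at $x \in V(A)$ while a distinct path $p'$ starts at $u \in V(B)$; the cross edge $x \to u$ exists, so I concatenate $p$ and $p'$ into a single path, lowering $|C|$ while keeping $\s(C)$.

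The delicate step is case~(\ref{L1_4}). Here I would first observe that after (\ref{L1_1}) and (\ref{L1_2}) have been applied exhaustively, every Steiner vertex has both of its path-neighbours in its own part, so any remaining edge between two Steiner vertices $u \to v$ has $u,v$ and their neighbours $x,y$ all in the \emph{same} part, say $V(A)$. When this run sits at the $A$--$B$ boundary I can delete a Steiner vertex and re-link through the guaranteed cross edge; the main obstacle is exactly the case where $u$ and $v$ are strictly interior to $V(A)$, since then a bypass would need an internal edge of $A$ that the composition does not supply. I expect this to be the technical heart of the proof: the intended resolution is to delete a Steiner vertex, breaking $p$ into two fragments, and then restore $|C|$ by re-splicing the fragment ending in $V(A)$ to an opposite-part endpoint through a cross edge, arranging the bookkeeping so that $|C|$ returns to its previous value while $\s(C)$ strictly drops. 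Ensuring that a suitable opposite-part endpoint always exists -- which is where the full set of already-established normal-form conditions, together with the prefix/suffix description of paths, must be invoked -- is the step I expect to require the most care.

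Finally, the series-only configurations~(\ref{L1_5})--(\ref{L1_8}) exploit that in $A \otimes B$ every cross edge is present in both directions. For~(\ref{L1_5}) two paths starting in different parts can be spliced end-to-start through a cross edge so that all paths start in the same part, lowering $|C|$; cases~(\ref{L1_6})--(\ref{L1_8}) are handled analogously by splicing or by rerouting a single same-part edge through the opposite part, each move again chosen so that neither $|C|$ nor $\s(C)$ increases. The only real care needed here is to confirm that the splice or reroute edge always exists, which the bidirectional cross edges of the series composition guarantee, and to note that because $\Phi$ strictly decreases at every step the rewrites may be applied in any order and still terminate in a cover free of all structures~(\ref{L1_1})--(\ref{L1_8}).
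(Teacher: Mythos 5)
Your overall rewriting framework matches the paper's in spirit, and your treatments of structures~(\ref{L1_1})--(\ref{L1_3}) coincide with the paper's operations. But the step you yourself flag as ``the technical heart'' --- structure~(\ref{L1_4}) with both Steiner vertices interior to one part --- is a genuine gap, and your sketched resolution fails. Consider $G = A \oslash B$ where the offending segment $(x,u,v,y)$ lies entirely inside $V(A)$, say on a single path of the cover. Deleting $u$ splits the path into a fragment ending in $V(A)$ and a fragment starting in $V(A)$; since the order composition provides \emph{no} edges from $V(B)$ to $V(A)$, the second fragment cannot be entered through a cross edge at all, and no ``opposite-part endpoint'' need exist (the cover may consist of this one path). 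So the delete-and-resplice bookkeeping cannot restore $|C|$, and the top-level cross-edge structure --- the only tool your proof uses --- is simply insufficient here. The paper's resolution is of a completely different nature: it exploits that $G$ is a directed co-graph \emph{globally}, not just at the top level. The underlying undirected graph is $P_4$-free, so the four consecutive vertices $x,u,v,y$ must carry at least one additional arc among the pairs $(x,v)$, $(u,y)$, $(x,y)$; if some such arc is oriented so as to shortcut the path, one deletes $u$, $v$, or both and reroutes, and the case analysis in Table~\ref{T_P4} shows that any orientation in which \emph{no} additional arc shortcuts the path would create one of the eight forbidden induced subdigraphs of Figure~\ref{F-co-ex}, a contradiction. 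Without this forbidden-subgraph argument (or a substitute using the recursive co-graph decomposition of $A$ itself), structure~(\ref{L1_4}) cannot be eliminated. A smaller error in the same case: your claim that after exhausting (\ref{L1_1})--(\ref{L1_2}) all four of $x,u,v,y$ lie in the same part does not follow --- exhausting (\ref{L1_2}) only rules out a Steiner vertex with neighbors in \emph{different} parts, which in $A \otimes B$ still permits a zigzag such as $x \in V(B)$, $u \in V(A)$, $v \in V(B)$, $y \in V(A)$.

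Your termination argument is also broken. You claim every rewrite strictly decreases $\Phi(C) = |C| + \s(C)$, but several of the necessary fix-ups leave both quantities unchanged: the tail-swap resolving structure~(\ref{L1_6}) when $p \neq p'$ (paths $(q_1,x,v,q_4)$ and $(q_3,u,y,q_2)$ replace $p,p'$ with the same path count and Steiner count), the split resolving (\ref{L1_5}), and the single-path rearrangements for (\ref{L1_7})--(\ref{L1_8}) when $p = p'$. With $\Phi$ stationary on these moves, your ``apply in any order'' termination claim collapses --- one must rule out cycling among $\Phi$-neutral rewrites. The paper instead terminates via three separate bounded counters: operations removing Steiner vertices are bounded by $|V - (T_A \cup T_B)|$, the path-merging operation by $\max\{|T_A|,|T_B|\}$, and the operations that add $A$--$B$ cross edges are bounded by counting such edges against the operations that remove them. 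You would need a comparable lexicographic or multi-counter potential to make your process terminate.
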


\begin{proof}
\begin{enumerate}
\item \label{P1_1} 
  If $x$ is removed from $p$ we get a cover with one Steiner vertex
  less than $C$.


\item \label{P1_2} 
  If $x$ is removed from $p$, we get a cover with one Steiner vertex
  less than $C$.


\item \label{P1_3} 
  We combine the paths to only one path $(q_1, x, u, q_2)$ and we get
  a cover with one path less than $C$.


\item \label{P1_4} 
  Since $G$ is a directed co-graph, the underlying undirected graph is
  a co-graph such that the path cannot include a $P_4$, i.e.\ a simple
  path of 4 vertices, as induced subgraph.
  Thus, there must be at least one additional arc. If such an
  additional arc would shorten the Steiner path by skipping $u$, $v$ or
  both then we remove $u$ or $v$ or both and take the shortcut for
  getting a cover $C'$. Additional arcs that do not shorten the path
  would create a forbidden induced subgraph from Figure~\ref{F-co-ex}
  which is not possible. For details see Table \ref{T_P4}.

\item \label{P1_5} 
  The new paths are $q_1$ and $(x,u.q_2)$. The cover $C'$ is as good as $C$.


\item \label{P1_6} 
  If $p \neq p'$, then $(q_1, x, v, q_4)$ and $(q_3, u, y, q_2)$ are the
  paths in cover $C'$.

  \centerline{\epsfxsize=.7\textwidth \epsfbox{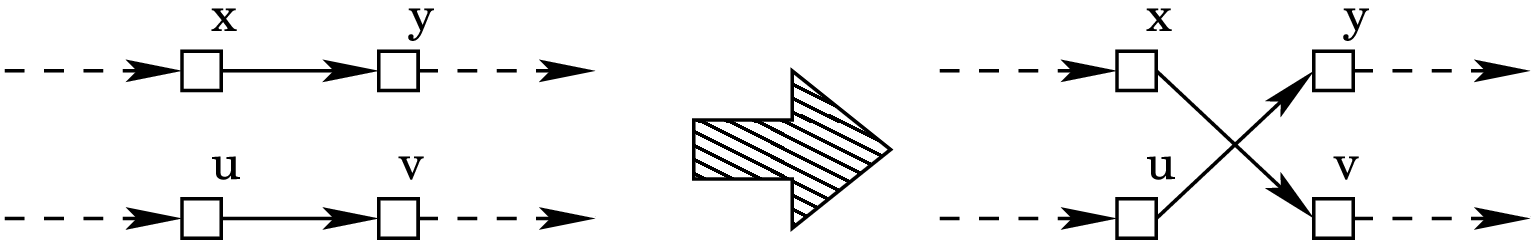}}

  If $p=p'$, then we have to distinguish whether
  $(u,v)\in q_1$,  $(u,v)\in q_2$, $(x,y)\in q_3$, or  $(x,y)\in q_4$. We
  show how to handle the first case, the other three cases are similar.
  Let $p=(q_3,u,v,q_5,b,a,q_6,x,y,q_2)$, where $b\in V(B)$ and $a\in V(A)$.
  Then the new path in cover $C'$ is $(q_3,u,a,q_6,x,v,q_5,b,y,q_2)$. Such
  vertices $a$ and $b$ must exist because $v \in V(B)$ and $x \in V(A)$,
  possibly it holds $a=x$ or $b=v$. In any case cover $C'$ is as good as $C$.

  \centerline{\epsfxsize=.7\textwidth \epsfbox{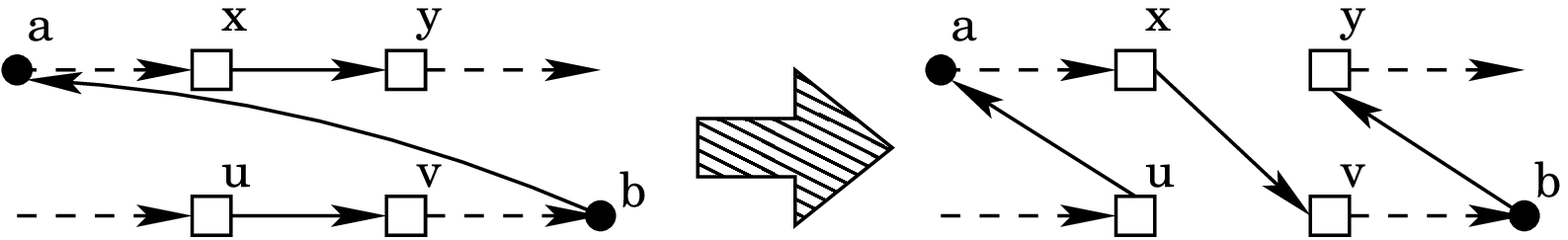}}

\item \label{P1_7} 
  If $p \neq p'$, then $q_1$ and $(q_2,u,x,v,q_3)$ are the new paths
  in cover $C'$. If $p=p'$, i.e.\ $q_1 = (q'_2,u,y,v,q_3)$, where $q'_2$ 
  is obtained from $q_2$ by removing $x$, then
  $(q'_2,u,x,v,q_3)$ is the new path in cover $C'$. The cover $C'$
  is as good as $C$. If $p \neq p'$, then the edge $(a,b)$ is missing
  in the following figure.

  \centerline{\epsfxsize=.7\textwidth \epsfbox{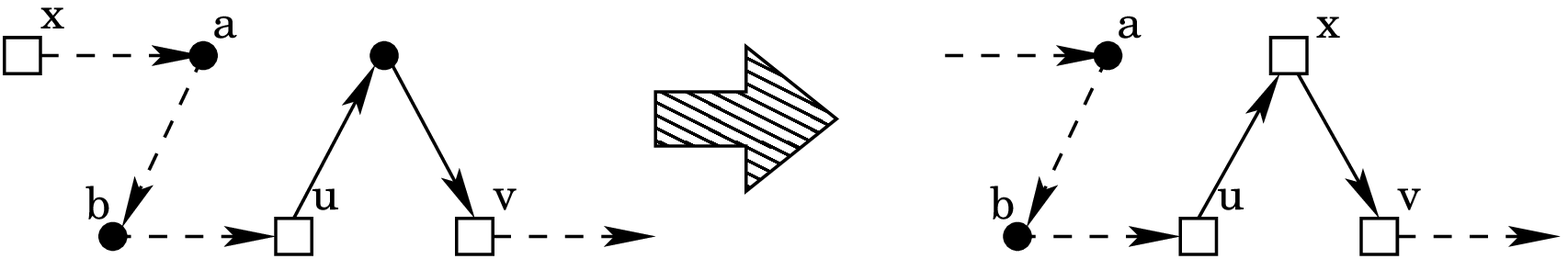}}

\item \label{P1_8} 
  If $p \neq p'$, then $q_1$ and $(q_2,u,x,v,q_3)$ are the new paths
  in cover $C'$. If $p=p'$, i.e.\ $q_1 = (q'_2,u,v,q_3)$,  where $q'_2$ 
  is obtained from $q_2$ by removing $x$, then
  $(q'_2,u,x,v,q_3)$ is the new path in cover $C'$. The cover $C'$
  is as good as $C$. If $p \neq p'$, then the edge $(a,b)$ is missing 
  in the following figure.

  \centerline{\epsfxsize=.7\textwidth \epsfbox{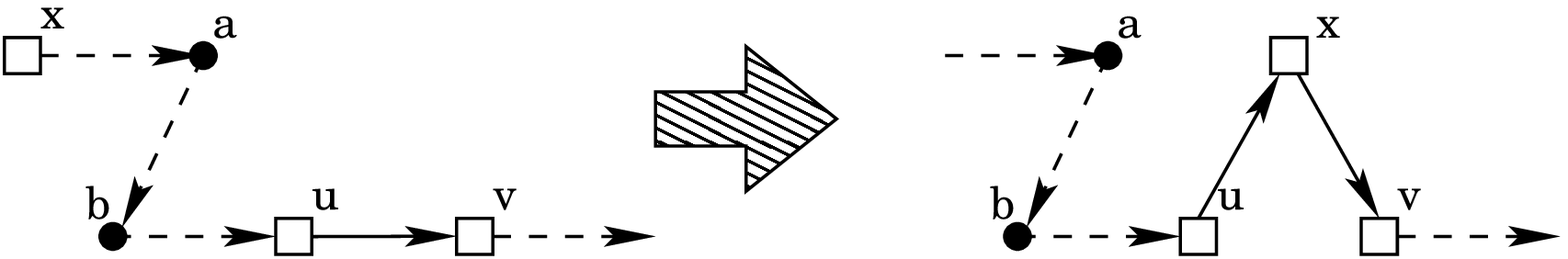}}

\end{enumerate}
Operations 1, 2, 4 and 7 reduce the number of Steiner vertices by one,
the remaining operations 3, 5 and 6 do not change the number of Steiner
vertices. Therefore, operations 1, 2, 4 and 7 can only be executed at
most $|V - (T_A \cup T_B)|$ times.

Operation 6 reduces the number of paths by one, the remaining operations
do not increase the number of paths. Therefore operation 6 can be executed
at most $\max\{|T_A|, |T_B|\}$ times.

Let us now consider those edges on a path that connect vertices of $A$ and
vertices of $B$. The maximum number of those edges is $|V(A)| + |V(B)| - 1$.
Operation 7 can remove two such edges, operations 3 and 5 can add two
such edges. Since the other operations 1, 2, 4 and 6 do not reduce the
number of edges, operations 3 and 5 can be used at most
$(|V(A)| + |V(B)| - 1) / 2 + |V - (T_A \cup T_B)|$ times.
\end{proof}

Since the hypothesis of Lemma \ref{L_weakNF} is symmetric in $A$
and $B$, the statement of Lemma \ref{L_weakNF} is also valid for
co-graphs $G = A \otimes B$ if $A$ and $B$ are switched.

\begin{table}[hbtp]
\begin{center}
\begin{tabular}{|c||c|c|c|c|}
\hline
graph with underlying $P_4$ and additional & none & $a$ & $b$ & $c$ \\
\cline{2-5}
edges that do not shorten the path &
      $a$, $b$ & $a$, $c$ & $b$, $c$ & $a$, $b$, $c$ \\
\hline
\hline
\multirow{4}{*}{\epsfig{figure=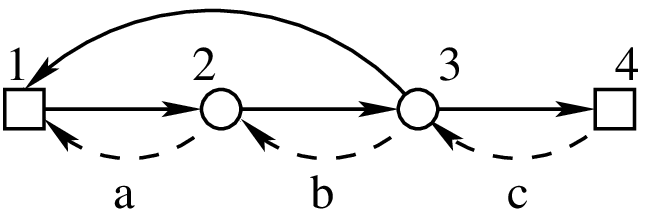,width=3.1cm}} &
 $D_5$ & $D_4$ & $D_4$ & $D_3$ \\
 & $\{1,2,3\}$ & $\{1,2,3\}$ & $\{1,2,3\}$ & $\{1,3,4\}$ \\
 \cline{2-5}
 & $D_3$ & $D_4$ & $D_4$ & $D_3$ \\
 & $\{2,3,4\}$ & $\{1,2,3\}$ & $\{1,2,3\}$ & $\{1,3,4\}$ \\
\hline \hline
\multirow{4}{*}{\epsfig{figure=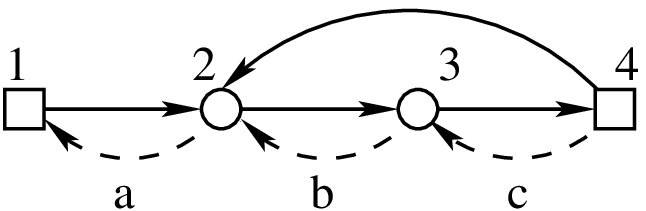,width=3.1cm}} &
 $D_5$ & $D_2$ & $D_2$ & $D_4$ \\
 & $\{2,3,4\}$ & $\{1,2,4\}$ & $\{1,2,3\}$ & $\{2,3,4\}$ \\
 \cline{2-5}
 & $D_2$ & $D_4$ & $D_2$ & $D_2$ \\
 & $\{1,2,4\}$ & $\{2,3,4\}$ & $\{1,2,3\}$ & $\{1,2,4\}$ \\
\hline \hline
\multirow{4}{*}{\epsfig{figure=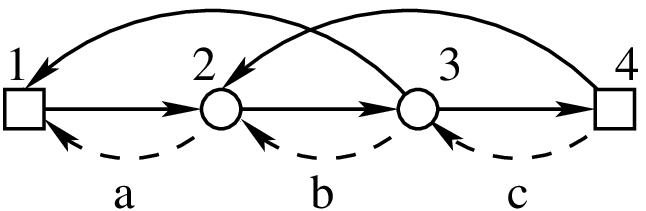,width=3.1cm}} &
 $D_5$ & $D_4$ & $D_4$ & $D_4$ \\
 & $\{2,3,4\}$ & $\{1,2,3\}$ & $\{1,2,3\}$ & $\{2,3,4\}$ \\
 \cline{2-5}
 & $D_4$ & $D_4$ & $D_4$ & $D_2$ \\
 & $\{2,3,4\}$ & $\{2,3,4\}$ & $\{1,2,3\}$ & $\{1,2,4\}$ \\
\hline \hline
\multirow{4}{*}{\epsfig{figure=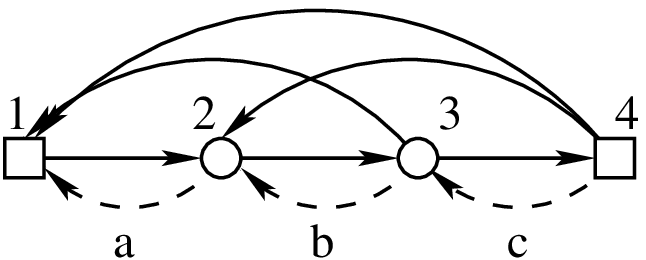,width=3.1cm}} &
 $D_5$ & $D_4$ & $D_4$ & $D_4$ \\
 & $\{1,2,3\}$ & $\{1,2,3\}$ & $\{1,2,3\}$ & $\{2,3,4\}$ \\
 \cline{2-5}
 & $D_4$ & $D_4$ & $D_4$ & $D_6$ \\
 & $\{2,3,4\}$ & $\{2,3,4\}$ & $\{1,2,3\}$ & $\{1,2,3,4\}$ \\
\hline \hline
\multirow{4}{*}{\epsfig{figure=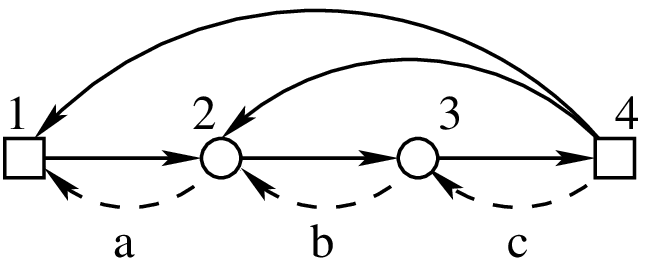,width=3.1cm}} &
 $D_5$ & $D_3$ & $D_2$ & $D_1$ \\
 & $\{2,3,4\}$ & $\{1,2,3\}$ & $\{1,2,3\}$ & $\{1,2,3\}$ \\
 \cline{2-5}
 & $D_1$ & $D_3$ & $D_3$ & $D_3$ \\
 & $\{1,3,4\}$ & $\{1,3,4\}$ & $\{1,3,4\}$ & $\{1,3,4\}$ \\
\hline \hline
\multirow{4}{*}{\epsfig{figure=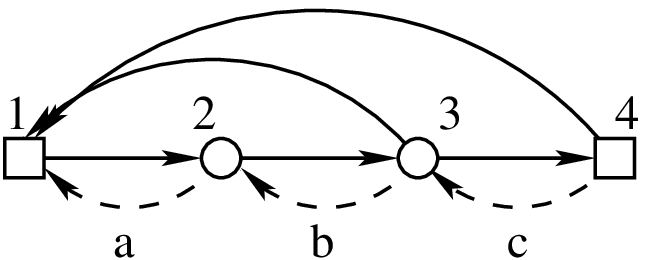,width=3.1cm}} &
 $D_5$ & $D_1$ & $D_1$ & $D_1$ \\
 & $\{1,2,3\}$ & $\{2,3,4\}$ & $\{1,2,4\}$ & $\{1,2,4\}$ \\
 \cline{2-5}
 & $D_2$ & $D_2$ & $D_1$ & $D_2$ \\
 & $\{1,2,4\}$ & $\{1,2,4\}$ & $\{1,2,4\}$ & $\{1,2,4\}$ \\
\hline \hline
\multirow{4}{*}{\epsfig{figure=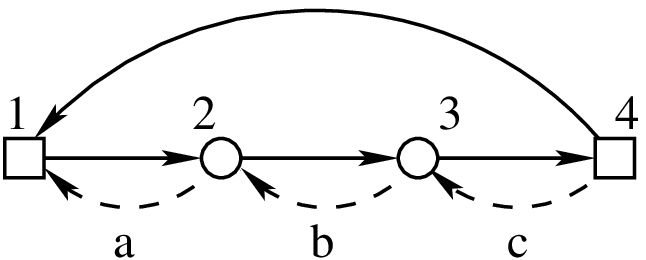,width=3.1cm}} &
 $D_1$ & $D_1$ & $D_1$ & $D_1$ \\
 & $\{1,2,3\}$ & $\{2,3,4\}$ & $\{1,2,4\}$ & $\{1,2,3\}$ \\
 \cline{2-5}
 & $D_3$ & $D_2$ & $D_2$ & $D_3$ \\
 & $\{2,3,4\}$ & $\{1,2,4\}$ & $\{1,2,3\}$ & $\{1,3,4\}$ \\
\hline
\end{tabular}
\end{center}
\caption{The leftmost column shows a graph with underlying undirected $P_4$
  and at least one additional arc that do not shorten the path. The other
  columns shows the forbidden subgraphs that are contained in the leftmost
  graph depending on the edges of the $P_4$.}
\label{T_P4}
\end{table}

\begin{definition}
A directed Steiner path cover $C$ for some directed co-graph
$G = A \oslash B$ or $G = A \otimes B$ is said to be in
{\em normal form} if none of the operations described in the proof
of Lemma \ref{L_weakNF} is applicable.
\end{definition}

In the following we assume that a directed Steiner path cover for some 
directed co-graph $G = A \oslash B$ or $G = A \otimes B$ is always in
normal form, since the operations of the proof of Lemma \ref{L_weakNF}
do not increase the number of paths or Steiner vertices of a cover.
Lemma \ref{L_weakNF} implies the following theorem.

\begin{theorem}\label{T_strongNF}
For each directed co-graph $G = A \otimes B$ and set of terminal vertices
$T \subseteq V(G)$ any directed Steiner path cover $C$ in normal form with
respect to $T$ does not contain an edge of digraph $A$,
and no path in $C$ starts or ends in digraph $A$
if $|T_A| < |T_B|$.
\end{theorem}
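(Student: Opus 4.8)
The plan is to derive both conclusions from the forbidden‑structure list (\ref{L1_1})--(\ref{L1_8}) of Lemma~\ref{L_weakNF}, together with their $A\leftrightarrow B$ symmetric versions, reducing everything to a single counting inequality. Throughout write $T_A=T\cap V(A)$ and $T_B=T\cap V(B)$, and recall that $C$ covers every terminal, so $\sum$(A‑terminals on the paths)$=|T_A|$, $\sum$(B‑terminals)$=|T_B|$, and $\sum$(Steiner vertices)$=\s(C)$.

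I would first prove that $C$ contains no edge of $A$, arguing by contradiction: assume some path uses an arc inside $A$. By (\ref{L1_6}) the cover then uses edges of only one digraph, so no arc inside $B$ is used; and by the symmetric form of (\ref{L1_8}) together with (\ref{L1_5}) every path must start in $A$ (if all paths started in $B$, the symmetric (\ref{L1_8}) would forbid the $A$‑arc). The absence of $B$‑arcs forces every $B$‑vertex on a path to have only $A$‑neighbours, and then (\ref{L1_1}), (\ref{L1_2}), (\ref{L1_4}) and (\ref{L1_7}) jointly force every Steiner vertex, whether in $A$ or in $B$, to be an interior vertex flanked by two $A$‑\emph{terminals}: an $A$‑Steiner vertex cannot have two $B$‑neighbours by (\ref{L1_7}) and cannot mix digraphs by (\ref{L1_2}), a $B$‑Steiner vertex has two $A$‑neighbours because no $B$‑arc exists, and in both cases (\ref{L1_4}) rules out a Steiner neighbour.

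From this flanking property I would extract a ``gap lemma'': reading a path as an alternation of $A$‑terminals with the maximal runs of remaining vertices between them, every such run has length at most one. Indeed, a run of length $\ge 2$ either contains a Steiner vertex not adjacent to an $A$‑terminal on both sides (contradicting the flanking property) or consists of two consecutive $B$‑vertices (an excluded $B$‑arc). Since a path starting in $A$ begins with an $A$‑terminal, a path with $p$ many $A$‑terminals has at most $p$ runs, hence at most $p$ remaining vertices, giving per path
\[
(\#\,B\text{-terminals})+(\#\,\text{Steiner vertices})\ \le\ (\#\,A\text{-terminals}).
\]
Summing over all paths yields $|T_B|+\s(C)\le|T_A|$, and as $\s(C)\ge 0$ this contradicts $|T_A|<|T_B|$. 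Hence no edge of $A$ is used. With this in hand, no path can start in $A$: otherwise (\ref{L1_5}) makes all paths start in $A$, (\ref{L1_8}) then forbids any $B$‑arc, so $C$ consists of single‑vertex paths, and since $|T_B|\ge 1$ some $B$‑terminal is a one‑vertex path starting in $B$, a contradiction. Thus by (\ref{L1_5}) every path starts in $B$, and when $|C|\ge 2$ a path ending in $A$ would, together with any other path (which starts in $B$), trigger (\ref{L1_3}), so no path ends in $A$.

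The main obstacle is the single‑path case: one path $p$ starting in $B$ and ending at an $A$‑terminal $x$ (an endpoint is a terminal by (\ref{L1_1})). Here the pairwise operations (\ref{L1_3}) and (\ref{L1_5}) do not apply, so the endpoint claim cannot be read off the forbidden‑structure list directly. The intended resolution uses the series composition: since the last vertex $x\in V(A)$ and the first vertex $y\in V(B)$, the arc $x\to y$ is present in $A\otimes B$, so $p$ may be rotated along this arc into a path with the same vertex set, the same length, and the same Steiner count whose two endpoints both lie in $B$. Turning this rotation into a canonical step that the normal form is required to have applied is the delicate point, and is where I would concentrate the effort; once it is settled, the endpoint statement follows in all cases and the theorem is complete.
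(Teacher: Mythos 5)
Your first half (no edge of $A$) is essentially the paper's own argument, carried out more carefully: like the paper you use structures (\ref{L1_5}), (\ref{L1_6}) and the symmetric form of (\ref{L1_8}) to force all paths to start in $A$, and your ``gap lemma'' is a rigorous version of the paper's one-line count (``all vertices of $V(B)$ from $C$ are connected with a terminal vertex of $V(A)$, thus $|T_A| > |T_B|$''). However, your argument that no path starts in $A$ contains a step that fails: from ``all paths start in $A$'' and the absence of arcs inside $A$ and inside $B$ you conclude that $C$ consists of single-vertex paths. That is false, because in $A \otimes B$ all cross arcs are present, so paths may alternate $(a_1,b_1,a_2,b_2,\ldots)$. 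The repair is immediate and is what the paper means by ``continuing as in the first case'': every path begins with an $A$-terminal by (\ref{L1_1}); $A$-Steiner vertices cannot occur at all, since by (\ref{L1_1}), (\ref{L1_2}), (\ref{L1_7}) both their neighbours would have to lie in $A$, but by your first part no arc of $A$ is available; hence every $B$-vertex on a path has an $A$-terminal as predecessor, the predecessor map is injective, and $|T_B| \leq |T_A|$, contradicting $|T_A| < |T_B|$. Your own gap lemma yields exactly this, so the single-vertex-paths claim should be deleted, not patched.

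On the endpoint claim you stopped short of a proof, and the case you isolate is a genuine lacuna --- one the paper's proof silently skips, since it only treats paths that \emph{start} in $A$, while structures (\ref{L1_3}) and (\ref{L1_5}) both require $p \neq p'$. Under the paper's literal definition of normal form the statement even fails in this case: take $A = \bullet_a$ with $T_A = \{a\}$ and $B$ a single arc $(b_1,b_2)$ with $T_B = \{b_1,b_2\}$; the cover $\{(b_1,b_2,a)\}$ triggers none of the operations of Lemma \ref{L_weakNF} (nor their $A \leftrightarrow B$ symmetric versions), yet $|T_A| < |T_B|$ and the path ends in $A$. Your rotation is the correct repair, and it can be made canonical with less delicacy than you fear: close the operation list under arc reversal. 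Reversal maps covers of $A \otimes B$ to covers of the reversed digraph, which is again a series composition of directed co-graphs; it exchanges starts with ends and preserves $|C|$ and $\s(C)$, and the reversed cover $(a,b_2,b_1)$ then triggers precisely the $p = p'$ case of operation (\ref{L1_8}), whose transformation the paper already spells out; translating back gives exactly your rotation $(b_2,a,b_1)$, with both endpoints in $B$. In summary: your first part is correct and matches the paper's proof; your second part has a concrete false step with an immediate fix from your own machinery; your third part is incomplete as you admit --- but the missing piece is also missing, unacknowledged, from the paper's own proof, and your proposal is the right way to close it.
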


\begin{proof}[by contradiction]
Assume, the Steiner path cover $C$ contains an edge of digraph $A$.
Then by Lemma \ref{L_weakNF}(\ref{L2_5}), all paths starts in digraph $A$.
By Lemma \ref{L_weakNF}(\ref{L2_4}), it holds that no Steiner vertex $v$
of $V(A)$ is contained
in $C$, where the neighbors of $v$ are both of digraph $B$. By Lemma
\ref{L_weakNF} (\ref{L2_1}), (\ref{L2_2}), and (\ref{L2_5}), it holds
that all vertices of $V(B)$ from $C$ are connected with a terminal
vertex of $V(A)$, thus $|T_A| > |T_B|$.
\lightning

Second, we have to show that
no path in $C$ starts or ends in digraph $A$. Assume on the contrary, that
there is one path that starts in $A$. By Lemma \ref{L_weakNF}(\ref{L2_6}),
it holds that all paths start in $A$. Continuing as in the first case this
leads to a contradiction.
\end{proof}

\begin{remark}\label{R2_strongNF}
For each directed co-graph $G = A \oslash B$ and set of terminal vertices
$T \subseteq V(G)$ any directed Steiner path cover $C$ in normal form
with respect to $T$ it holds that each path that starts in $A$ 
either remains in $A$ or it crosses over to $B$ and remains in $B$.
Each path that reaches a vertex of $B$ has to stay in $B$ since no edge
from a vertex in $B$ to a vertex in $A$ exists.
\end{remark}

\section{Algorithms for the Directed Steiner Path Cover Problem}

\subsection{Computing the optimal number of paths}

Let $G$ be a directed co-graph and $T \subseteq V(G)$ be a set of terminal
vertices. We define $p(G,T)$ as the minimum number of paths within a
Steiner path cover for $G$ with respect to $T$. Further let $s(G,T)$ be
the minimum number of Steiner vertices in a  directed Steiner path cover
of size $p(G,T)$ with respect to $T$. We do not specify set $T$ if it is
clear from the context which set is meant.

\begin{lemma}\label{le-d-p}
Let $A$ and $B$ be two vertex-disjoint digraphs and let
$T_A \subseteq V(A)$ and $T_B \subseteq V(B)$ be two sets of terminal
vertices.  Then the following equations hold true:
\begin{enumerate}
\item \label{L2_1}
  $p(\bullet_v, \emptyset)=0$ and $p(\bullet_v, \{v\})=1$

\item \label{L2_2}
  $p(A \oplus B, T_A \cup T_B) = p(A, T_A) + p(B, T_B)$

\item \label{L2_3}
  $p(A \otimes B, \emptyset) = 0$

\item \label{L2_4}
  $p(A \otimes B, T_A \cup T_B) = \max\{1, p(B, T_B) - |V(A)|\}$
  if $1 \leq |T_B|$ and  $|T_A| \leq |T_B|$

\item \label{L2_5}
  $p(A \otimes B, T_A \cup T_B) = \max\{1, p(A, T_A) - |V(B)|\}$
  if $1 \leq |T_A|$ and  $|T_A| > |T_B|$

\item \label{L2_6}
  $p(A \oslash B, T_A \cup T_B) = p(A, T_A)$ if $p(A) \geq p(B)$

\item \label{L2_7}
  $p(A \oslash B, T_A \cup T_B) = p(B, T_B)$ if $p(A) < p(B)$

\end{enumerate}
\end{lemma}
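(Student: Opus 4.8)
The plan is to establish the seven identities one operation at a time, the real content lying in the series case (\ref{L2_4})--(\ref{L2_5}) and the order case (\ref{L2_6})--(\ref{L2_7}). Parts (\ref{L2_1})--(\ref{L2_3}) I would dispatch directly: a single non-terminal needs no path while a single terminal needs exactly the trivial one-vertex path, giving (\ref{L2_1}); since $A \oplus B$ has no arc between $V(A)$ and $V(B)$, every path of any cover lies entirely in $A$ or entirely in $B$, so an optimal cover is the disjoint union of optimal covers of the two parts and (\ref{L2_2}) follows; and for $T = \emptyset$ the empty cover is optimal, giving (\ref{L2_3}).

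For the series composition I would prove matching lower and upper bounds on $k := p(A \otimes B, T_A \cup T_B)$ under the hypothesis of (\ref{L2_4}), namely $1 \le |T_B|$ and $|T_A| \le |T_B|$; equation (\ref{L2_5}) then follows by interchanging the roles of $A$ and $B$. For the lower bound, take any Steiner path cover with $k$ paths and delete from it all vertices of $V(A)$. A path carrying $j$ vertices of $A$ falls apart into at most $j+1$ maximal runs of $B$-vertices, and inside such a run every consecutive pair is joined by an arc of $B$; hence these runs form a Steiner path cover of $B$ for $T_B$, of which there are at most $k + |V(A)|$ in total. Consequently $p(B, T_B) \le k + |V(A)|$, so $k \ge p(B, T_B) - |V(A)|$, and since $T_B \neq \emptyset$ forces $k \ge 1$ we obtain $k \ge \max\{1, p(B,T_B) - |V(A)|\}$.

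For the upper bound I would start from an optimal cover of $B$ with $p(B,T_B)$ paths and splice paths together using vertices of $A$: two $B$-paths ending at $x$ and beginning at $x'$ are merged by the arcs $x \to a \to x'$ for a vertex $a \in V(A)$, which exist because $A \otimes B$ contains all arcs between the two parts in both directions. If $p(B,T_B) - |V(A)| \ge 1$, I would spend every vertex of $A$ as such a splice, cutting the count to exactly $p(B,T_B) - |V(A)|$ and automatically covering all of $T_A$. Otherwise I would merge the whole $B$-cover into a single path using $p(B,T_B)-1$ vertices of $A$, and then place the remaining terminals of $T_A$ in the gaps still occupied by plain $B$-arcs and at the two ends of the path. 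The main obstacle is exactly this last placement: the single path contains $n_B \ge |T_B|$ vertices of $B$, hence $n_B - 1$ internal gaps plus two ends, and a short count shows that after using $p(B,T_B)-1$ gaps for splices the free positions still suffice for the at most $|T_A|$ remaining terminals precisely because $|T_A| \le |T_B| \le n_B$. This is the only place the hypothesis $|T_A| \le |T_B|$ enters, and it is essential: if $A$ is a large independent set and $B$ a single terminal, then the formula of (\ref{L2_4}) would wrongly predict one path, which is exactly why the complementary case is split off as (\ref{L2_5}).

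For the order composition I would appeal to Remark \ref{R2_strongNF}: in $A \oslash B$ no arc runs from $B$ back to $A$, so every path of a normal-form cover is the concatenation of a (possibly empty) initial segment in $A$ with a (possibly empty) final segment in $B$. A cover is therefore described by a family of $A$-segments covering $T_A$, a family of $B$-segments covering $T_B$, and a set of junctions, each gluing one $A$-segment to one $B$-segment through an $A \to B$ arc, which always exists. Writing $\alpha \ge p(A,T_A)$ and $\beta \ge p(B,T_B)$ for the numbers of nonempty segments and $j \le \min\{\alpha,\beta\}$ for the number of junctions, the total number of paths equals $\alpha + \beta - j \ge \max\{\alpha,\beta\} \ge \max\{p(A,T_A), p(B,T_B)\}$, which is the lower bound; conversely, taking optimal covers of $A$ and $B$ and gluing $\min\{p(A,T_A), p(B,T_B)\}$ pairs attains $\max\{p(A,T_A), p(B,T_B)\}$. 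This value is $p(A,T_A)$ when $p(A) \ge p(B)$ and $p(B,T_B)$ when $p(A) < p(B)$, giving (\ref{L2_6}) and (\ref{L2_7}).
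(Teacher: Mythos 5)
Your proposal is correct and takes essentially the same route as the paper's proof: your lower bounds (deleting the vertices of one side and counting the at most $|V(A)|$ extra pieces, or decomposing each path of $A \oslash B$ into an $A$-prefix and a $B$-suffix) are the direct form of the paper's contradiction arguments via vertex removal, and your upper bounds (splicing $B$-paths through vertices of $A$, inserting leftover $T_A$-terminals at $B$-arcs and path ends; concatenating $A$-paths with $B$-paths) match the paper's constructions. Your only additions are bookkeeping the paper leaves implicit, namely the explicit count $|T_A| \leq |T_B| \leq n_B$ of free insertion positions where the paper just asserts ``this can always be done since $|T_A| \leq |T_B|$'', and the unified $\alpha + \beta - j$ counting that yields $p(A \oslash B) = \max\{p(A), p(B)\}$ in one stroke.
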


\begin{proof}
\ref{L2_1}. - \ref{L2_3}. Obvious.
\begin{enumerate}
\setcounter{enumi}{3}
\item \label{P2_5}
  We show that $p(A \otimes B) \geq \max\{1,p(B)-|V(A)|\}$
  applies by an indirect proof. Assume a directed Steiner path cover $C$ for
  $A \otimes B$ has less than $ \max\{1,p(B) - |V(A)|\}$
  paths. The removal of all vertices of $A$ from all paths in $C$
  gives a directed Steiner path cover of size $|C| + |V(A)| < p(B)$ for
  $B$. \lightning

  To see that $p(A \otimes B) \leq \max\{1, p(B) - |V(A)|\}$
  applies, consider that we can use any vertex of $A$ to combine two
  paths of the cover of $B$ to one path, since the series composition
  of $A$ and $B$ creates all directed edges between $A$ and $B$. If there
  are more terminal vertices in $T_A$ than there are paths in the cover
  of $B$, i.e.\ $p(B) < |T_A|$,
  then we have to split paths of $B$ and reconnect them by terminal
  vertices of $T_A$. This can always be done since $|T_A| \leq |T_B|$.

\item \label{P2_6}
  Similar to \ref{P2_5}.

\item \label{P2_7}
  To see that $p(A \oslash B) \leq p(A)$ applies, consider that
  we can append any path of $A$ by any path of $B$, see Lemma
  \ref{L_weakNF}(\ref{L2_3}). Since no edge
  between $B$ and $A$ is created, no path of $B$ can be extended
  by a path of $A$.

  We show that $p(A \oslash B) \geq p(A)$ applies by an indirect
  proof. Assume a directed Steiner path cover $C$ for $A \oslash B$
  contains less than $p(A)$ paths. The removal of all vertices of $B$
  from all paths in $C$ gives a Steiner path cover of size
  $|C| < p(A)$. \lightning

\item \label{P2_8}
  Similar to \ref{P2_7}.
\end{enumerate}
This shows the statements of the lemma.
\end{proof}

\subsection{Computing the optimal number of Steiner vertices}

\begin{remark} \label{R_s}
For two vertex-disjoint directed co-graphs $A$, $B$ and two sets of
terminal vertices $T_A \subseteq V(A)$, $T_B \subseteq V(B)$ it holds
that $s(A \oplus B, T_A \cup T_B) = s(A, T_A) + s(B, T_B)$,
since the disjoint union does not create any new edges.
\end{remark}

\begin{remark} \label{R_dir-s}
Let $G=A \oslash B$ be a directed co-graph, and let $C$ be a directed Steiner
path cover of $G$ such that $p=(q_1,u_1,x,q_2,v_1)$ is a path in $A$,
$p_1=(u_2, q_3)$ and $p_2=(v_2,q_4)$ are paths in $B$, all paths are
vertex-disjoint paths in $C$, where $x \not\in T$,
$u_1,u_2,v_1,v_2 \in T$, and $q_1,\ldots,q_4$ are sub-paths. Then we can
split $p$ at vertex $x$ into two paths, combine them with $p_1$ and $p_2$
to get $(q_1,u_1,u_2,q_3)$ and $(q_2,v_1,v_2,q_4)$ as new paths and we
get a Steiner path cover without increasing the cost. 
If $A$ and $B$ are switched we get $(u_2,q_3,q_1,u_1)$ and
$(v_2,q_4,q_2,v_1)$ as new paths and the statement also holds.
\end{remark}

What follows is the central lemma of our work, the proof is by induction
on the structure of the co-graph.

\begin{lemma} \label{L_dir_sum}
For every directed co-graph $G$ and every  directed Steiner path cover
$C$ for $G$ with respect to a set $T \subseteq V(G)$ of terminal vertices
it holds that $p(G) + s(G) \leq |C| + \s(C)$.
\end{lemma}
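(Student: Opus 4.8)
The plan is to prove the inequality $p(G) + s(G) \leq |C| + \s(C)$ by structural induction on the directed co-expression defining $G$, mirroring the recursive definition of directed co-graphs and using the values $p(\cdot)$ and $s(\cdot)$ established (or to be established) by Lemma~\ref{le-d-p} and the companion recurrences for $s$. The quantity $p(G)+s(G)$ is, by definition, the minimum of $|C'|+\s(C')$ over the \emph{optimal} covers $C'$ (those of minimum size, then minimum Steiner count), so the content of the lemma is that this minimum is a genuine lower bound for the combined cost $|C|+\s(C)$ of an \emph{arbitrary} cover $C$. The crucial preliminary reduction is that, by Lemma~\ref{L_weakNF}, I may assume $C$ is in normal form: the normalizing operations never increase $|C|$ and never increase $\s(C)$, so it suffices to prove the bound for covers in normal form, and this is exactly the hypothesis that Theorem~\ref{T_strongNF} and Remark~\ref{R2_strongNF} exploit.

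\textbf{Base case and easy composition.} For $G=\bullet_v$ the claim is immediate from Lemma~\ref{le-d-p}(\ref{L2_1}) and the fact that a single vertex has $s=0$. For the disjoint union $G=A\oplus B$, any cover $C$ splits into its restriction $C_A$ to $V(A)$ and $C_B$ to $V(B)$ (no path can cross, since $\oplus$ adds no edges), with $|C|=|C_A|+|C_B|$ and $\s(C)=\s(C_A)+\s(C_B)$. Applying the induction hypothesis to each piece and adding, then invoking Lemma~\ref{le-d-p}(\ref{L2_2}) and Remark~\ref{R_s}, gives $p(G)+s(G)=p(A)+s(A)+p(B)+s(B)\leq |C_A|+\s(C_A)+|C_B|+\s(C_B)=|C|+\s(C)$.

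\textbf{The two composition cases are the real work.} For $G=A\oslash B$ I would use Remark~\ref{R2_strongNF}: in normal form every path either stays in $A$, stays in $B$, or crosses once from $A$ into $B$. This lets me read off, from $C$ restricted and re-split along these crossings, induced covers of $A$ and of $B$ whose combined sizes and Steiner counts are controlled by $|C|$ and $\s(C)$; I then feed these into the induction hypothesis and compare with the recurrence of Lemma~\ref{le-d-p}(\ref{L2_6})--(\ref{L2_7}), which selects $p(G)=\max\{p(A),p(B)\}$. For $G=A\otimes B$ I would likewise invoke Theorem~\ref{T_strongNF} (assuming w.l.o.g. $|T_A|\le|T_B|$, by the symmetry noted after Lemma~\ref{L_weakNF}) to conclude that a normal-form cover uses no edge of $A$ and has all path-endpoints in $B$, so every vertex of $A$ that appears on a path sits strictly interior, splicing together two $B$-fragments. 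Counting how many $A$-vertices are used to merge $B$-paths then relates $|C|$ directly to $p(B)$ via the $\max\{1,p(B)-|V(A)|\}$ formula of Lemma~\ref{le-d-p}(\ref{L2_4}), and the Steiner bookkeeping follows from the corresponding recurrence for $s$.

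\textbf{The main obstacle} will be the $\otimes$-case bookkeeping of the \emph{combined} quantity $p+s$ rather than the two quantities separately. The subtlety is that Theorem~\ref{T_strongNF} guarantees a normal-form cover spends vertices of $A$ only to reduce the path count of $B$, but a non-optimal normal-form cover may use \emph{fewer} vertices of $A$ as mergers (leaving more of $A$'s terminals as isolated path-endpoints or as additional Steiner interiors) and thereby have more paths \emph{and} possibly a different Steiner count; I must show that any such slack in the number of paths is compensated, term-for-term, when I add in the Steiner contribution, so that the sum $p(G)+s(G)$ stays below $|C|+\s(C)$. Concretely, the key inequality to nail down is that each ``missing merge'' that raises $|C|$ above $p(G)$ corresponds to at least as large a deficit in $\s(C)$ relative to $s(G)$ — a trade-off argument that I expect to carry out by tracking, along each path of $C$, the alternation of $A$- and $B$-vertices forced by the normal form and using Remark~\ref{R_dir-s} to rearrange suboptimal configurations without increasing cost. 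Once this amortized counting is set up, substituting the closed forms from Lemma~\ref{le-d-p} and the analogous $s$-recurrence completes the induction.
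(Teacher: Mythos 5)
Your overall architecture is the same as the paper's: structural induction on the co-expression, the reduction to normal form via Lemma \ref{L_weakNF}, Theorem \ref{T_strongNF} in the series case, and Remarks \ref{R2_strongNF} and \ref{R_dir-s} in the order case; you also correctly identify the crux, namely that the induction hypothesis must be applicable to \emph{arbitrary} (non-optimal) residual covers, which is exactly what is needed after deleting the $A$-vertices in the $\otimes$-case. However, there is one genuine logical flaw in the plan as written: you propose to close both composition cases by ``substituting the closed forms from Lemma \ref{le-d-p} and the analogous $s$-recurrence,'' i.e., by invoking Lemma \ref{le-d-s} as an available black box. In the paper's development this is circular: the lower-bound halves of Lemma \ref{le-d-s} (for instance $s(A\otimes B) \geq s(B)+p(B)-p(A\otimes B)-|T_A|$) are proved \emph{from} Lemma \ref{L_dir_sum} --- the paper states this explicitly before Remark \ref{r-s} --- precisely because the residual cover $D$ of $B$ extracted from an optimal cover of $A\otimes B$ need not be optimal for $B$ (cf.\ Remark \ref{r-s}), so only the combined bound $\s(D)+|D|\geq s(B)+p(B)$ is usable, and that bound \emph{is} Lemma \ref{L_dir_sum}.

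What you actually need, and what the paper supplies inline instead, is only the constructive direction: an upper bound $p(A\otimes B)+s(A\otimes B)\leq p(B)+s(B)-|T_A|$ (assuming $|T_A|\leq|T_B|$) obtained by exhibiting a cover --- merging paths of an optimal cover of $B$ with terminals of $A$, replacing Steiner vertices of $B$ by leftover terminals of $A$, with the case split $p(B)-1\leq|V(A)|$ versus $p(B)-1>|V(A)|$ --- and analogously, in the $\oslash$-case, the bound $s(A\oslash B)\leq \s(C_{|A})+\s(C_{|B})-\min\{\s(C_{|B}),\,|C_{|A}|-|C_{|B}|\}$ via the splitting-and-reconnecting move of Remark \ref{R_dir-s}. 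Your ``amortized counting'' for the other half is sound and matches the paper: in normal form every used vertex of $A$ is interior with both neighbors in $B$, so with $X(A)$ the used $A$-vertices and $nt(X(A))$ its non-terminals one gets the exact identity $|C|+\s(C)=|D|+\s(D)-|T_A|$, since $|C|=|D|-|T_A|-nt(X(A))$ and $\s(C)=\s(D)+nt(X(A))$, after which the induction hypothesis applied to $D$ finishes. So the skeleton is right; replace the appeal to the $s$-recurrence by these inline constructive bounds and the proof goes through.
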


\begin{proof}[by induction]
The statement is obviously valid for all directed co-graphs which consist
of only one vertex. Let us assume that the statement is valid for directed
co-graphs of $n$ vertices. Let $A$ and $B$ are vertex-disjoint directed
co-graphs of at most $n$ vertices each.

\smallskip \noindent {\bf Disjoint union:}
Let $G = A \oplus B$ be a directed co-graph that consists of more than
$n$ vertices.  By Lemma \ref{le-d-p}, and Remark \ref{R_s}, it holds that
$p(A \oplus B) +s(A \oplus B) = p(A) + p(B) + s(A) + s(B)$. By the
induction hypothesis, it
holds that $p(A) + s(A) \leq |C_{|A}| + \s(C_{|A})$ and
$p(B) + s(B) \leq |C_{|B}| + \s(C_{|B})$, where $C_{|A}$ denotes the cover
$C$ restricted to digraph $A$, i.e.\ the cover that results from $C$ when
all vertices of $B$ are removed. Then the statement of the lemma follows.
\[
    p(A \oplus B) + s(A \oplus B) \leq
       |C_{|A}| + \s(C_{|A}) + |C_{|B}| + \s(C_{|B}) =
       |C| + \s(C)
\]

\noindent {\bf Series composition:}
Let $G = A \otimes B$ be a directed co-graph that consists of more than $n$
vertices.  Without loss of generality, let $|T_A| \leq |T_B|$.

\begin{enumerate}
\item Let $X(A)$ denote the vertices of $A$ used in cover $C$, and let
  $D$ denote the cover for $B$ that we obtain by removing the vertices
  of $X(A)$ from cover $C$. By the induction hypothesis, it holds that
  $p(B) + s(B) \leq |D| + \s(D)$.

\item Let $nt(X(A))$ denote the number of non-terminal vertices of
  $X(A)$.  By Theorem \ref{T_strongNF} it holds that
  $\s(C) = \s(D) + nt(X(A))$ and
  $|C| = |D| - |T_A| - nt(X(A))$.  Thus, we get
  $|C| + \s(C) = |D| + \s(D) - |T_A|$.
\end{enumerate}

We put these two results together and obtain:
\[
  p(B) + s(B) - |T_A| \leq |D| + \s(D) - |T_A| = |C| + \s(C)
\]
To show the statement of the lemma, we first consider the case
$p(B) - 1 \leq |V(A)|$.
Then it holds that $p(A \otimes B) = 1$. If $|T_A| \geq p(B) - 1$, then
$d := |T_A| - (p(B)-1)$ many Steiner vertices from $B$, if available,
can be replaced by terminal vertices from $A$.
Otherwise if $|T_A| < p(B) - 1$, then $-d = (p(B) - 1) - |T_A|$ many
Steiner vertices from $A$ are used to combine the paths.
Thus, it holds that $s(A \otimes B) \leq \max\{0,s(B) - d\}$
since the number of Steiner vertices in an optimal cover is at most
the number of Steiner vertices in a certain cover. Thus, since
$p(A \otimes B) = 1$ we get for $s(B)\geq d$:
\begin{eqnarray*}
p(A \otimes B) + s(A \otimes B) & \leq &
     1 + s(B) - d=  1 + s(B) - (|T_A| - (p(B)-1)) \\
 & = & \cancel{1} + s(B) - |T_A| + p(B) - \cancel{1} ~ \leq ~ |C| + \s(C)
\end{eqnarray*}
If  $s(B)< d$ then all Steiner vertices of $B$ can be replaced by terminal
vertices of $A$ and since $|T_A|\leq |T_B|$ holds, some of the paths 
of $B$ can be reconnected by the remaining terminal vertices of $A$. Thus,
$p(A \otimes B) + s(A \otimes B)=1\leq |C| + \s(C)$ applies.

Consider now the case where $p(B) - 1 > |V(A)|$ holds, i.e.\ not all paths
in an optimal cover for $B$ can be combined by vertices of $A$.  By Lemma
\ref{le-d-p}, it holds that $p(A \otimes B) = \max\{1, p(B) - |V(A)|\}$.
Thus, for $p(A \otimes B) > 1$ we get:
\begin{eqnarray*}
p(A \otimes B) + s(A \otimes B)
 & \leq & p(B) - |V(A)| + s(B) + nt(A) \\
 & = & p(B) + s(B) - |T_A| ~ \leq ~ |C| + \s(C)
\end{eqnarray*}
The non-terminal vertices of $A$ must be used to combine paths
of the cover, thus the non-terminal vertices of $A$ become Steiner
vertices.

\smallskip \noindent {\bf Order composition:}
Let $G = A \oslash B$ be a directed co-graph that consists of more than
$n$ vertices.  By the induction hypothesis,
it holds that $p(A) + s(A) \leq |C_{|A}| + \s(C_{|A})$ and
$p(B) + s(B) \leq |C_{|B}| + \s(C_{|B})$.

Let us first consider the case $p(A) > p(B)$. By Lemma \ref{le-d-p} it
holds $p(A \oslash B) = p(A)$. We can append any path of $A$ by any path
of $B$, and by Remark \ref{R_dir-s} it holds that for every path that there
is more in $A$ than in $B$, a Steiner vertex of $B$ can be removed. And
since an optimal cover has at most as many Steiner vertices as a concrete
cover, it holds
$s(A \oslash B) \leq \s(C_{|A}) + \s(C_{|B}) - \min \{\s(C_{|B}),
|C_{|A}| - |C_{|B}|\}$. If we sum up both equations we get
\[
   p(A \oslash B) + s(A \oslash B) \leq
      p(A) + \s(C_{|A}) + \s(C_{|B}) -
          \min \{\s(C_{|B}), |C_{|A}| - |C_{|B}|\}
\]
If $\s(C_{|B}) \geq |C_{|A}| - |C_{|B}|$ applies, and since
$\s(C) = \s(C_{|A}) + \s(C_{|B})$ applies, we get
\[
   p(A \oslash B) + s(A \oslash B) \leq p(A) + \s(C) - |C_{|A}| + |C_{|B}|.
\]
The statement would be shown if $p(A) - |C_{|A}| + |C_{|B}| \leq |C|$
would apply. It holds $p(A) \leq |C_{|A}|$, since an optimal cover has at
most as many paths as a concrete cover, and it holds $|C_{|B}| \leq |C|$,
since $|C| = \max\{|C_{|A}|, |C_{|B}|\}$ by Remark \ref{R2_strongNF}.
We sum up these
equations and we get $p(A) + |C_{|B}| \leq |C_{|A}| + |C|$, which is
equivalent to $p(A) - |C_{|A}| + |C_{|B}| \leq |C|$, thus
$p(A \oslash B) + s(A \oslash B) \leq |C| + \s(C)$ has been shown.

If $\s(C_{|B}) < |C_{|A}| - |C_{|B}|$, then it holds
$p(A \oslash B) + s(A \oslash B) \leq p(A) + \s(C_{|A})$, and we have
to show that $p(A) + \s(C_{|A}) \leq |C| + \s(C)$ applies. It holds
$p(A) \leq |C_{|A}|$, since an optimal cover has at most as many
paths as a concrete cover, and it holds $|C_{|A}| \leq |C|$, since
$|C| = \max\{|C_{|A}|, |C_{|B}|\}$ by Remark \ref{R2_strongNF}.
Furthermore, it holds $\s(C_{|A}) \leq \s(C)$, since a part is only
as big as the whole.

The other case $p(A) \leq p(B)$ can be shown in a similar way.
\end{proof}

To see why Lemma \ref{L_dir_sum} is crucial for the rest of this work,
consider the directed graph $B$ of Figure~\ref{F_contra} that is not a directed
co-graph. Terminal vertices $T_A=\{f,g\}$ and $T_B=\{a,c,e,u,w,x\}$ are
shown as squares.  In the left part of the figure a Steiner path cover
$C_\ell = \{(a,b,c,d,e), (u,v,w,x,y)\}$ for graph $B$ is shown with
$|C_\ell| = 2$ and $\s(C_\ell) = 4$ which is optimal. In the right part
of the figure a Steiner path cover $C_r = \{(a,b,c,w,x,y), (e), (u)\}$
for $B$ is shown with $|C_r| = 3$ and $\s(C_r) = 2$.
\begin{figure}[hbtp]
{\epsfxsize=.4\textwidth \epsfbox{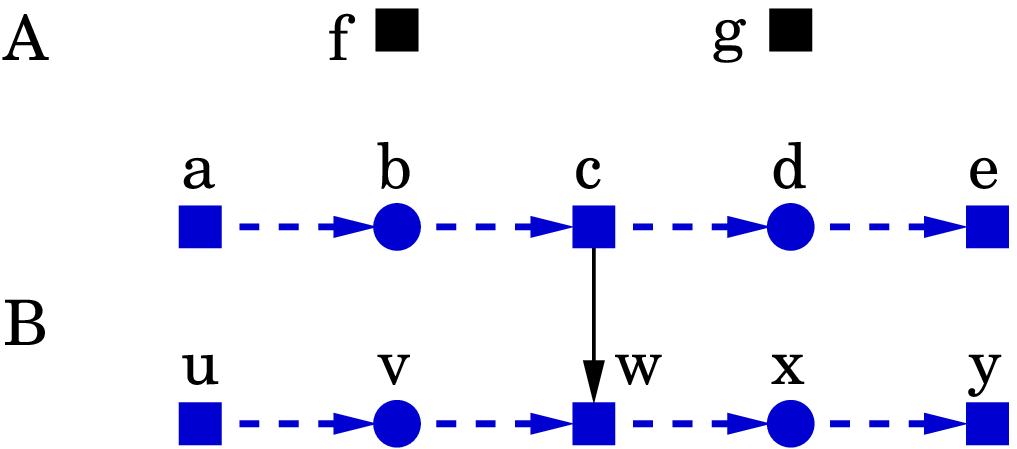}} \hfill
{\epsfxsize=.4\textwidth \epsfbox{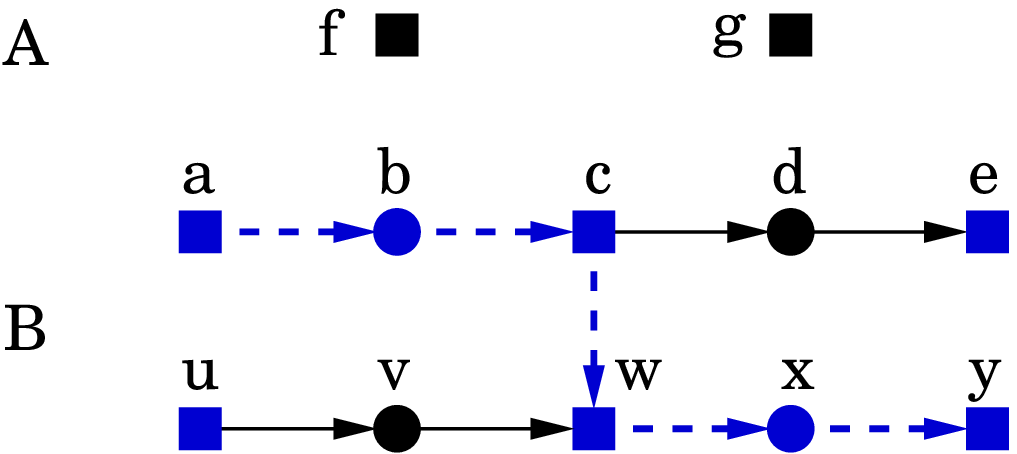}}
\caption{Small example that shows the contrapositive of the statement of
  Lemma \ref{L_dir_sum} in a graph $B$ that is not directed co-graph.}
\label{F_contra}
\end{figure}
The right cover can be extended to an optimal cover for $A \otimes B$ if
the vertices of $A$ are used to combine the path: $\{(u,f,a,b,c,w,x,y,g,e)\}$
is an optimal cover for $A \otimes B$ with only one path and 2 Steiner
vertices. The left Steiner path cover can not be extended to an optimal
cover for $A \otimes B$. For example, we can split path $(a,b,c,d,e)$ at
vertex $b$ into two paths $(a)$ and $(c,d,e)$ and reconnect them by a vertex
of $A$ and get $(a,f,c,d,e)$. The other vertex of $A$ must be
used to combine the remaining two paths to $(a,f,c,d,e,g,u,v,w,x,y)$ which
results in a cover for $A \otimes B$ that consists of one path but 3 Steiner
vertices. For graph $B$ the statement of Lemma \ref{L_dir_sum} is not
satisfied: $p(B) + s(B) = 2 + 4 = 6 > |C_r| + \s(C_r) = 3 + 2 = 5$

In the proof of Lemma \ref{le-d-s} we use the statement of Lemma
\ref{L_dir_sum} to show that optimal solutions for directed co-graphs
$A$ and $B$ can be combined to an optimal solution for $A \oslash B$
and $A \otimes B$.

\begin{remark} \label{r-s}
Let $G$ be a directed co-graph and let $C$ be a directed Steiner path cover
for $G$ with respect to some set of terminal vertices $T \subseteq V(G)$.
Then $\s(C) \geq s(G)$ holds only if $|C| = p(G)$.
If $|C| > p(G)$ then $\s(C)$ might be smaller than $s(G)$.
\end{remark}


\begin{lemma}\label{le-d-s}
Let $A$ and $B$ be two vertex-disjoint digraphs, and let
$T_A \subseteq V(A)$, $T_B \subseteq V(A)$ be sets of terminal
vertices. Then the following equations applies:
\begin{enumerate}
\item \label{L3_1}
  $s(\bullet_v, \emptyset)=0$ and $s(\bullet_v, \{v\})=0$

\item \label{L3_2}
  $s(A \oplus B, T_A \cup T_B) = s(A, T_A) + s(B, T_B)$

\item \label{L3_3}
  $s(A \otimes B) = \max\{0, s(B) + p(B) - p(A \otimes B) - |T_A|\}$
  if $|T_A| \leq |T_B|$
 
\item \label{L3_4}
  $s(A \otimes B) = \max\{0,s(A) + p(A) - p(A \otimes B) - |T_B|\}$
  if $|T_A| > |T_B|$

\item \label{L3_5}
  $s(A \oslash B) = s(A) + s(B)$ if $p(A) = p(B)$

\item \label{L3_6}
  $s(A \oslash B) = s(A) + s(B) - \min\{s(A), p(B)-p(A)\}$
  if $p(A) < p(B)$

\item \label{L3_7}
  $s(A \oslash B) = s(A) + s(B) - \min\{s(B), p(A)-p(B)\}$
  if $p(A) > p(B)$
\end{enumerate}
\end{lemma}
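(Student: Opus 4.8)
The plan is to treat the three binary operations separately and, for the two non-trivial ones, to prove each claimed equality by a matching pair of inequalities: a ``$\leq$'' coming from an explicit construction that turns optimal covers of $A$ and $B$ into a cover of the composed graph, and a ``$\geq$'' coming from a counting argument fed by Lemma~\ref{L_dir_sum}. Items~\ref{L3_1} and \ref{L3_2} are immediate: a single vertex, terminal or not, is covered with no Steiner vertex, and for the disjoint union no new edges appear, so Remark~\ref{R_s} gives additivity.

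For the series composition \ref{L3_3} (with \ref{L3_4} symmetric under $A \leftrightarrow B$), assume $|T_A| \leq |T_B|$ and recall from Lemma~\ref{le-d-p}(\ref{L2_4}) that $p(A \otimes B) = \max\{1, p(B) - |V(A)|\}$. For the lower bound I would take an arbitrary optimal cover $C$ (so $|C| = p(A \otimes B)$), let $D = C_{|B}$ be its restriction to $B$, and invoke Theorem~\ref{T_strongNF}: in normal form $C$ uses no edge of $A$ and no path ends in $A$, so every used vertex of $A$ sits between two vertices of $B$, which yields the identity $|C| + \s(C) = |D| + \s(D) - |T_A|$ already extracted in the proof of Lemma~\ref{L_dir_sum}. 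The crucial point is that $D$ need not be a path-optimal cover of $B$ (in general $|D| > p(B)$, so Remark~\ref{r-s} does not apply); this is exactly where Lemma~\ref{L_dir_sum} is indispensable, giving $p(B) + s(B) \leq |D| + \s(D)$ and hence $\s(C) \geq s(B) + p(B) - p(A \otimes B) - |T_A|$, which together with $\s(C) \geq 0$ proves ``$\geq$''. For the matching construction I would start from an optimal cover of $B$ and insert vertices of $A$ as connectors between two paths, always spending terminal vertices of $A$ first (these are free) and non-terminals only after the $|T_A|$ terminals are exhausted; any terminal of $A$ left over once the $p(B) - p(A \otimes B)$ connectors are placed is used to delete a Steiner vertex of $B$ (split the path there and reconnect through the terminal, in the spirit of Remark~\ref{R_dir-s}). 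Counting the surviving Steiner vertices gives exactly $\max\{0, s(B) + p(B) - p(A \otimes B) - |T_A|\}$.

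For the order composition, Lemma~\ref{le-d-p} gives $p(A \oslash B) = \max\{p(A), p(B)\}$, and by Remark~\ref{R2_strongNF} every path of a normal-form cover either stays in $A$, stays in $B$, or crosses once from $A$ into $B$; in particular $\s(C) = \s(C_{|A}) + \s(C_{|B})$ and $|C| = \max\{|C_{|A}|, |C_{|B}|\}$. I treat the representative case \ref{L3_6}, $p(A) < p(B)$, where $|C| = p(B)$ for an optimal $C$. Then $p(B) \leq |C_{|B}| \leq |C| = p(B)$ forces $|C_{|B}| = p(A \oslash B)$, so $C_{|B}$ is path-optimal and Remark~\ref{r-s} gives $\s(C_{|B}) \geq s(B)$; meanwhile $C_{|A}$ may be far from optimal, and here Lemma~\ref{L_dir_sum} applied to $A$ gives $\s(C_{|A}) \geq p(A) + s(A) - |C_{|A}|$, which combined with $\s(C_{|A}) \geq 0$ and $|C_{|A}| \leq |C| = p(B)$ yields $\s(C) \geq \max\{s(B),\, s(A) + s(B) - (p(B) - p(A))\} = s(A) + s(B) - \min\{s(A), p(B) - p(A)\}$. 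The matching upper bound is the construction: append the $p(A)$ paths of an optimal cover of $A$ to distinct paths of an optimal cover of $B$, and use each of the $p(B) - p(A)$ surplus $B$-paths to absorb one more fragment of an $A$-path after splitting it at a Steiner vertex (Remark~\ref{R_dir-s}), deleting $\min\{s(A), p(B) - p(A)\}$ Steiner vertices while keeping the path count at $p(B)$. Case~\ref{L3_5} ($p(A) = p(B)$) is the boundary where the minimum is $0$, and \ref{L3_7} is the mirror image with $A$ and $B$ interchanged.

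The main obstacle, and the reason Lemma~\ref{L_dir_sum} rather than the weaker Remark~\ref{r-s} is needed, is the lower-bound direction: when an optimal cover of the composed graph is restricted to one factor, that restriction is generally \emph{not} a path-optimal cover of the factor (it has been cut into more pieces), so one cannot simply assert it uses at least $s(\cdot)$ Steiner vertices. Lemma~\ref{L_dir_sum} precisely compensates for the extra paths, and the remaining delicacy is pure bookkeeping: pairing the Lemma~\ref{L_dir_sum} bound with the trivial bounds $\s(C) \geq 0$ and $\s(C_{|B}) \geq s(B)$ so that the two branches of the claimed $\max\{0,\cdot\}$ (equivalently the $\min\{\cdot,\cdot\}$) emerge, and verifying in the constructions that every terminal of $A$ is forced to appear and that no splitting operation disturbs the already-optimal number of paths.
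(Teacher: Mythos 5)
Your proposal is correct and follows essentially the same route as the paper's proof: matching upper bounds via explicit constructions (connector vertices of $A$ spent terminals-first for $\otimes$, concatenation plus splitting at Steiner vertices as in Remark~\ref{R_dir-s} for $\oslash$), and lower bounds obtained by restricting an optimal cover and compensating for the restriction's non-optimality with Lemma~\ref{L_dir_sum}, using Theorem~\ref{T_strongNF} for the counting identity in the series case and Remark~\ref{R2_strongNF} together with Remark~\ref{r-s} in the order case. The only deviation is cosmetic: in the order-composition lower bound you merge the paper's two subcases ($s(A) > p(B)-p(A)$ versus $s(A) \leq p(B)-p(A)$) into the single estimate $\s(C_{|A}) \geq \max\{0,\, s(A)+p(A)-|C_{|A}|\}$ with $|C_{|A}| \leq p(B)$, a slight streamlining of the same argument.
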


\begin{proof}
\begin{enumerate}
\item Obvious.
\item See Remark \ref{R_s}
\item First, we show $s(A \otimes B) \leq
\max\{0, s(B) + p(B) - p(A \otimes B) - |T_A|\}$.

By Lemma \ref{L_dir_sum}, we know that
$s(A \otimes B) + p(A \otimes B) \leq \s(C) + |C|$ holds
true for any cover $C$ for co-graph $A \otimes B$ and any set
of terminal vertices $T$. Consider cover $C$ for $A \otimes B$
obtained by an optimal cover $D$ for $B$ in the following way:
Use the terminal vertices of $A$ to either combine paths of $D$ or
to remove a Steiner vertex of $D$ by replacing $v \not\in T$ by some
terminal vertex of $A$ in a path like $(\ldots, u, v, w, \ldots) \in D$,
where $u,w \in T$. 
If $|T_A|\geq s(B)+p(B)$ then all paths of $D$ 
can be combined and all Steiner vertices can be removed by
terminal vertices of $A$ and since $|T_A|\leq |T_B|$
applies, some of the paths can be split and reconnected
by the remaining terminal vertices of $A$.
Thus, $\s(C) + |C|=1$ and $s(A \otimes B)=0$.

Otherwise, if $|T_A|< s(B)+p(B)$, then
we get $\s(C) + |C| = s(B) + p(B) - |T_A|$,
and by Lemma \ref{L_dir_sum}, we get the statement.
\[
\begin{array}{crcl}
  & s(A \otimes B) + p(A \otimes B) & \leq &
     s(B) + p(B) - |T_A|  ~ = ~ \s(C) + |C| \\
  \iff & s(A \otimes B) & \leq &
     s(B) + p(B) - p(A \otimes B) - |T_A|
\end{array}
\]

  We prove now that $s(A \otimes B) \geq
  \max\{0, s(B) + p(B) - p(A \otimes B) - |T_A|\}$.

  Let $X(A)$ be the vertices of $V(A)$ that are contained in the paths of
  an optimal cover $C$ for $A \otimes B$.
  Let $D$ be the cover for $B$ obtained by removing the vertices of
  $X(A)$ from $C$. Then by Theorem \ref{T_strongNF} 
  the following applies:
\[
\begin{array}{crcl}
  & |X(A)| = nt(X(A)) + |T_A| & = & |D| - p(A \otimes B) \\
  \iff & nt(X(A)) & = & |D| - p(A \otimes B) - |T_A|
\end{array}
\]
  Thus, we get:
\[
\begin{array}{rcl}
  s(A \otimes B) - nt(X(A)) ~ = ~ \s(D)
          & = & s(A \otimes B) - |D| + p(A \otimes B) + |T_A| \\
\iff ~~~ s(A \otimes B) & = & \s(D) + |D| - p(A \otimes B) - |T_A| \\
\Rightarrow ~~~~~ 
    s(A \otimes B) & \geq & s(B) + p(B) - p(A \otimes B) - |T_A|
\end{array}
\]
The implication follows since by Lemma \ref{L_dir_sum} it holds
$\s(D) + |D| \geq s(B) +p(B)$.

\item Can be shown similar to the previous item.
\item To see that $s(A \oslash B) \leq s(A) + s(B)$ applies,
  consider optimal covers $C$ and $D$ for $A$ and $B$. We construct a
  cover $E$ for $A \oslash B$ in such a way that any path of 
  $C$ is
  appended by a path of $D$, see Lemma \ref{L_weakNF}(\ref{L2_3}).
  Since $|E| = p(A \oslash B)$ holds, we get
  $s(A \oslash B) \leq \s(E) = \s(C) + \s(D) = s(A) + s(B)$,
  because an optimal cover has at most as many Steiner vertices as
  a concrete cover.

  To see that $s(A \oslash B) \geq s(A) + s(B)$ applies consider
  an optimal cover $C$ for $A \oslash B$. Then it holds
  $s(A \oslash B) = \s(C_{|A}) + \s(C_{|B}) \geq s(A) + s(B)$, since
  $|C_{|A}| = p(A) = p( A \oslash B)= p(B) = |C_{|B}|$.

\item We have to distinguish two cases. First, let
  $s(A) > p(B) - p(A)$.

  To see that
  $s(A \oslash B) \leq s(A) + s(B) - (p(B) - p(A))$ applies,
  consider optimal covers $C$ and $D$ for $A$ and $B$. We construct
  a cover $E$ for $A \oslash B$ in such a way that we first split
  $p(B) - p(A)$ many paths of $C$ at Steiner vertices as described
  in Remark \ref{R_dir-s}. Afterwards we put together each of the resulting
  paths by a path of $D$. 
  Thus it holds $|E| = p(A \oslash B) = p(B)$
  and therefore $s(A \oslash B) \leq \s(C) + \s(D) - (p(B) - p(A)) =
  s(A) + s(B) - (p(B) - p(A))$.

  Please note, a Steiner path cover $C$ for $A \oslash B$ with
  $\s(C_{|A}) > 0$ is not optimal if $|C_{|A}| < |C| = p(A \oslash B)$
  holds. By Remark \ref{R_dir-s} a path of $C_{|A}$ could be splitted
  at a Steiner vertex and the number of Steiner vertices could be reduced.

  To see that $s(A \oslash B) \geq s(A) + s(B) - (p(B) - p(A))$
  applies, consider an optimal cover $C$ for
  $A \oslash B$. Then it holds $s(A \oslash B) = \s(C) = \s(C_{|A}) +
  \s(C_{|B})$, and by the previous note it holds
  $|C| = p(A \oslash B) = p(B) = |C_{|A}|$. By Lemma \ref{L_dir_sum}
  we get $\s(C_{|A}) + |C_{|A}| \geq s(A) + p(A)$. If we sum up these
  equations we get $s(A \oslash B) + p(A \oslash B) =
  \s(C_{|A}) + |C_{|A}| + \s(C_{|B})$. Finally we get:
\begin{eqnarray*}
  s(A \oslash B)
    & = & \s(C_{|A}) + |C_{|A}| - p(A \oslash B) + \s(C_{|B}) \\
    & \geq & s(A) + p(A) - p(B) + \s(C_{|B}) 
    ~ \geq ~ s(A) + p(A) - p(B) + s(B)
\end{eqnarray*}
  The last step holds since $p(B) = |C_{|B}|$ and by Remark \ref{r-s}.

  Consider now the case that $s(A) \leq p(B) - p(A)$. To see that
  $s(A \oslash B) \leq s(B)$ applies,
  consider optimal covers $C$ and $D$ for $A$ and $B$. We construct
  a cover $E$ for $A \oslash B$ in such a way that we first split
  as many paths of $C$ at Steiner vertices as possible in a way described
  in Remark \ref{R_dir-s}. Afterwards all Steiner vertices 
  of $C$ have been removed and
  we put together each of the resulting
  paths by a path of $D$. 
  Thus it holds $|E| = p(A \oslash B) = p(B)$
  and therefore $s(A \oslash B) \leq \s(E) = s(B)$.

  To see that $s(A \oslash B) \geq s(B)$ applies, consider an
  optimal cover $C$ for $A \oslash B$. By the above note it holds
  $\s(C_{|A}) = 0$, since $C$ would not be optimal otherwise. Thus, we
  get $s(A \oslash B) = \s(C_{|B}) \geq s(B)$, since
  $|C_{|B}| = p(B)$ holds and by Remark \ref{r-s}.

\item Can be shown similar to the previous item.
\end{enumerate}
This shows the statements of the lemma.
\end{proof}

By Lemma \ref{le-d-p} and \ref{le-d-s}, and since a directed co-tree can
be computed in linear time from the input directed co-graph \cite{CP06},
we have shown the following result.

\begin{theorem}
The value of a  directed Steiner path cover of minimum cost for a directed
co-graph can be computed in linear time with respect to the size of the
directed co-expression.
\end{theorem}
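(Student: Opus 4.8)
The plan is to evaluate the recurrences of Lemmas~\ref{le-d-p} and~\ref{le-d-s} in a single bottom-up pass over the directed co-tree. First I would invoke the result of \cite{CP06} to construct, in time linear in the size of the input, a directed co-tree $\mathcal{T}$ whose leaves are the vertices of $G$ and whose internal nodes are labelled by the operations $\oplus$, $\oslash$, $\otimes$. Every node $t$ of $\mathcal{T}$ represents a directed co-graph $G_t$ together with its induced terminal set $T_t = T \cap V(G_t)$, and the root represents $G$ itself.

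Second, I would take the dynamic-programming state stored at each node $t$ to be the quadruple $(p(G_t), s(G_t), |V(G_t)|, |T_t|)$. The last two components are needed because several of the recurrences refer explicitly to $|V(A)|$, $|V(B)|$, $|T_A|$ and $|T_B|$, and because the applicable case is selected by comparing $|T_A|$ with $|T_B|$ (series composition) or $p(A)$ with $p(B)$ (order composition). At a leaf $\bullet_v$ the state is initialised directly from Lemma~\ref{le-d-p}(\ref{L2_1}) and Lemma~\ref{le-d-s}(\ref{L3_1}), with $|V| = 1$ and $|T| \in \{0,1\}$. At an internal node with children $A$ and $B$, the counts $|V|$ and $|T|$ are additive, while $p(G_t)$ and $s(G_t)$ are obtained from the stored values of $A$ and $B$ by applying the matching case of Lemmas~\ref{le-d-p} and~\ref{le-d-s}; each such evaluation is a constant number of additions, subtractions, comparisons and $\min/\max$ operations.

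Third, I would bound the running time. The co-tree has exactly $|V(G)|$ leaves, and since every internal node is binary the total number of nodes is $O(|V(G)|)$, which is at most the size of the directed co-expression. Processing each node costs $O(1)$ by the previous paragraph, so the whole traversal runs in linear time, and the desired value is read off at the root. For the unit-edge-weight setting that applies to co-graphs, minimum cost coincides with a minimum number of Steiner vertices: a minimum-size cover consists of $p(G)$ paths spanning $|T| + s(G)$ vertices and hence uses $|T| + s(G) - p(G)$ edges, so the cost follows immediately from the quadruple at the root.

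The substantive content — that these recurrences are correct — is already carried by Lemmas~\ref{le-d-p} and~\ref{le-d-s}, which in turn rest on the normal-form Lemma~\ref{L_weakNF} and the additivity Lemma~\ref{L_dir_sum}. Hence for the theorem itself the only remaining task is the bookkeeping: verifying that tracking $|V|$ and $|T|$ alongside $p$ and $s$ suffices both to decide which case of each recurrence applies and to evaluate it in constant time. I expect this to be routine; the single point deserving care is confirming that no recurrence needs information beyond the quadruples stored at the two children, so that the computation really is one bottom-up sweep with $O(1)$ work per node.
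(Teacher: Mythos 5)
Your proposal is correct and takes essentially the same route as the paper, whose proof of this theorem consists precisely of the observation that the recurrences of Lemmas~\ref{le-d-p} and~\ref{le-d-s} can be evaluated bottom-up along the directed co-tree, obtainable in linear time by \cite{CP06}, with constant work per node. Your explicit dynamic-programming state $(p, s, |V|, |T|)$ and the unit-weight cost formula $|T| + s(G) - p(G)$ only spell out bookkeeping that the paper leaves implicit.
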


Lemma \ref{L_dir_sum} allows us to minimize the following additional
cost function.

\begin{corollary}
The value of a directed Steiner path cover $C$ for a directed co-graph
$G$ such that $|C| + \s(C)$ is minimal can be computed in linear time
with respect to the size of the directed co-expression.
\end{corollary}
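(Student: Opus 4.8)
The plan is to reduce the computation of $\min_C\bigl(|C|+\s(C)\bigr)$, the quantity the corollary asks for, to the two invariants $p(G)$ and $s(G)$, which the preceding theorem already shows how to compute in linear time. The pivotal ingredient is Lemma~\ref{L_dir_sum}, and the point requiring care is the contrast with Remark~\ref{r-s}: a cover $C$ with $|C|>p(G)$ may well have \emph{fewer} Steiner vertices than $s(G)$, so the two summands cannot be minimized independently; the content of Lemma~\ref{L_dir_sum} is precisely that the \emph{sum} is nonetheless minimized by the size-optimal cover.

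Concretely, I would proceed in three steps. First, invoke Lemma~\ref{L_dir_sum}, which gives $p(G)+s(G)\le |C|+\s(C)$ for \emph{every} directed Steiner path cover $C$ of $G$; this establishes $p(G)+s(G)$ as a uniform lower bound on $|C|+\s(C)$. Second, exhibit a cover attaining the bound: by definition $p(G)$ is the minimum number of paths and $s(G)$ the minimum number of Steiner vertices among covers of size $p(G)$, so there is a cover $C^{\ast}$ with $|C^{\ast}|=p(G)$ and $\s(C^{\ast})=s(G)$, whence $|C^{\ast}|+\s(C^{\ast})=p(G)+s(G)$. Combining the two facts yields
\[
  \min_C\bigl(|C|+\s(C)\bigr)=p(G)+s(G).
\]

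Third, I would appeal to the algorithmic part. The preceding theorem computes both $p(G)$ and $s(G)$ in linear time in the size of the directed co-expression, by evaluating the recursions of Lemmas~\ref{le-d-p} and~\ref{le-d-s} bottom-up along the directed co-tree, which itself is obtainable in linear time by~\cite{CP06}. Since the target value is just the sum $p(G)+s(G)$, one extra arithmetic operation suffices, and the whole computation remains linear.

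I do not expect a genuine obstacle in this corollary itself: all the difficulty is already absorbed into the inductive proof of Lemma~\ref{L_dir_sum}, which handles the disjoint union, order composition and series composition cases separately. The only subtlety to flag explicitly is the one noted above, namely that one must minimize $|C|+\s(C)$ as a single quantity rather than minimizing $|C|$ and $\s(C)$ in turn, since Remark~\ref{r-s} forbids the latter; Lemma~\ref{L_dir_sum} is exactly what legitimizes the former.
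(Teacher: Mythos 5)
Your proposal is correct and coincides with the paper's (implicit, one-line) argument: the paper also derives the corollary directly from Lemma~\ref{L_dir_sum} as the lower bound $p(G)+s(G)\le |C|+\s(C)$, attained by a size-optimal cover with $s(G)$ Steiner vertices, and then reuses the linear-time evaluation of the recursions in Lemmas~\ref{le-d-p} and~\ref{le-d-s}. Your explicit flagging of Remark~\ref{r-s} (why the two summands cannot be minimized independently) is a sound elaboration of exactly the point the paper's phrasing leaves tacit.
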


\subsection{Computing an Optimal Directed Steiner Path Cover}

Now we want to give an algorithm to compute an optimal directed Steiner
path cover for some given directed co-graph. The function {\sc SeriesComp},
see Algorithm \ref{fig:algorithm3xa},
returns an optimal Steiner path cover for some co-graph $G = A \otimes B$,
if some optimal covers $C_A$ for $A$ and $C_B$ for $B$ are given as
parameter. Let $T_A \subseteq V(A)$ and $T_B \subseteq V(B)$ be the sets
of terminal vertices, and let $|T_A| \leq |T_B|$.
The function {\sc append}$(p,u)$ used in Algorithm \ref{fig:algorithm3xa} in
lines \ref{app1}, \ref{app2}, and \ref{app3} appends path $p$ by vertex $u$.
The function {\sc Combine}$(p, p', u)$ used in lines \ref{cat1} and
\ref{cat2}, combines path $p$ and $p'$ by vertex $u$.
The function {\sc Replace}$(p, v, u)$ used in line \ref{repl1} removes
the vertex $v$ from path $p$ and replaces it by vertex $u$.
The function {\sc Insert}$(p, v, u)$ used in line \ref{ins1} inserts
vertex $u$ between vertex $v$ and its successor in path $p$.

\begin{algorithm}[ht]
\footnotesize
\begin{algorithmic}[1]
\If {$|T_A| = |T_B|$ or $|T_A| = |T_B| - 1$}
      \Comment{results in only one path}
  \State let $T'_A := T_A$; let $T'_B := T_B$; let $p := ()$;
  \While {$T'_A \neq \emptyset$}
    \State let $b \in T'_B$; $T'_B := T'_B - \{b\}$;
         \Call{Append}{$p$, $b$}; \label{app1}
    \State let $a \in T'_A$; $T'_A := T'_A - \{a\}$;
         \Call{Append}{$p$, $a$}; \label{app2}
  \EndWhile
  \If {$T'_B \neq \emptyset$}
    \State let $b \in T'_B$; $T'_B := T'_B - \{b\}$;
         \Call{Append}{$p$, $b$} \label{app3}
  \EndIf
  \State \textbf{return} $\{ p \}$
\EndIf
\State let $T'_A := T_A$; let $C'_B := C_B$;
\State let $p \in C'_B$; $C'_B := C'_B - \{p\}$;

\While {$T'_A \neq \emptyset$ and $C'_B \neq \emptyset$}
      \Comment{combine 2 paths by a terminal vertex of $A$}
   \State let $a \in T'_A$; $T'_A := T'_A - \{a\}$;
   \State let $p' \in C'_B$; $C'_B := C'_B - \{p'\}$;
   \State \Call{Combine}{$p$, $p'$, $a$}; \label{cat1}
\EndWhile
\If {$T'_A = \emptyset$}
  \State let $U := V(A) - T_A$;
      \Comment{$U$ is the set of non-terminal vertices of $A$}
  \While {$U \neq \emptyset$ and $C'_B \neq \emptyset$}
      \Comment{combine 2 paths by a non-terminal vertex of $A$}
    \State let $u \in U$; $U := U - \{u\}$;
    \State let $p' \in C'_B$; $C'_B := C'_B - \{p'\}$;
    \State \Call{Combine}{$p$, $p'$, $u$} \label{cat2}
  \EndWhile
  \State \textbf{return} $\{ p \} \cup C'_B$
\EndIf
\If {$C'_B = \emptyset$}
  \While {$T_A \neq \emptyset$}
    \State let $a \in T'_A$; $T'_A := T'_A - \{a\}$;
    \If {$\exists$ Steiner vertex $v$ in path $p$}
      \State \Call{Replace}{$p$, $v$, $a$} \label{repl1}
            \Comment{remove a Steiner vertex from $B$}
    \Else
      \State let $\{u,v\}$ be an edge between to terminal vertices of $B$
      \State \Call{Insert}{$p$, $v$, $a$} \label{ins1}
            \Comment{add terminal vertices of $A$ to the path}
    \EndIf
  \EndWhile
  \State \textbf{return} $\{ p \}$
\EndIf
\end{algorithmic}
\normalsize
\caption{{\sc SeriesComp}(set $A$, Cover $C_A$, set $B$, Cover $C_B$)}  
\label{fig:algorithm3xa}
\end{algorithm}

Similar to function {\sc SeriesComp} we introduce a function
{\sc OrderComp}, see algorithm \ref{fig:algorithm3xb},
which returns an optimal directed Steiner path cover
for some directed co-graph $G = A \oslash B$, if some optimal covers
for $A$ and $B$ are given as parameter.
The function {\sc Concat}$(p,p')$ used in line \ref{concat} extends path
$p$ by adding path $p'$ at its end.

\begin{algorithm}[ht]
\footnotesize
\begin{algorithmic}[1]
\State let $C'_A := C_A$; let $C'_B := C_B$;
\If {$|C_A| < |C_B|$}
    \Loop ~ $\min\{ \s(C_A), |C_B| - |C_A| \}$-times
        \State let $r \in C'_A$ that contains a Steiner vertex
        \State $C'_A := C'_A - \{r\}$
        \State $C'_A := C'_A \cup \{r_1, r_2\}$ where $r_1, r_2$ result
             from  $r$ by splitting $r$ at a Steiner vertex
    \EndLoop
\EndIf
\If {$|C_A| > |C_B|$}
    \Loop ~ $\min\{ \s(C_B), |C_A| - |C_B| \}$-times
        \State let $r \in C'_B$ that contains a Steiner vertex
        \State $C'_B := C'_B - \{r\}$
        \State $C'_B := C'_B \cup \{r_1, r_2\}$ where $r_1, r_2$ result
             from  $r$ by splitting $r$ at a Steiner vertex
    \EndLoop
\EndIf
\State $C := \emptyset$
\Loop ~ $\min\{|C'_A|, |C'_B|\}$-times
    \State let $a \in C'_A$; $C'_A := C'_A - \{a\}$;
    \State let $b \in C'_B$; $C'_B := C'_B - \{b\}$;
    \State \Call{Concat}{$a,b$}  \label{concat}
    \State $C := C \cup \{a\}$
\EndLoop
\While {$C'_A \neq \emptyset$}
    \State let $a \in C'_A$; $C'_A := C'_A - \{a\}$; $C := C \cup \{a\}$
\EndWhile
\While {$C'_B \neq \emptyset$}
    \State let $b \in C'_B$; $C'_B := C'_B - \{b\}$; $C := C \cup \{b\}$
\EndWhile
\State \textbf{return} $C$
\end{algorithmic}
\caption{{\sc OrderComp}(set $A$, Cover $C_A$, set $B$, Cover $C_B$)}  
\label{fig:algorithm3xb}
\end{algorithm}

Let $G$ be a directed co-graph represented by its binary directed
co-tree $T(G)$. The function {\sc DirectedSteinerPathCover},
see algorithm \ref{fig:algorithm3x},
recursively computes a directed Steiner path cover of minimum cost
of the subgraph of $G$ induced by the vertices of $T(G)$. For the
series composition we assume that the left subtree $T_{x_\ell}$
of any vertex $x$ of $T(G)$ contains no more terminal vertices than
its right subtree $T_{x_r}$.
Otherwise we only had to swap the children of the vertex $x$.

\begin{algorithm}[ht]
\footnotesize
\begin{algorithmic}
\If {$x$ is the only vertex of $T$}
  \If {$x$ is a terminal vertex of $G$}
    \State \textbf{return} $\{(x)\}$
  \EndIf
  \State \textbf{return} $\emptyset$

\Else 
  \State $C_A$ := \Call{DirectedSteinerPathCover}{$T_{x_{\ell}}$,$x_{\ell}$}
       \Comment{$x_{\ell}$ is the left successor of $x$}
  \State $C_B$ := \Call{DirectedSteinerPathCover}{$T_{x_r}$,$x_r$}
       \Comment{$x_{r}$ is the right successor of $x$}
  
  \If {$x$  corresponds to a disjoint union  of $T$}
    \State \textbf{return}  $C_A \cup C_B$
   \EndIf                   
   \If {$x$  corresponds to a series composition of $T$}
    \State \textbf{return}
       \Call{SeriesComp}{$V(T_{x_{\ell}})$, $C_A$, $V(T_{x_{r}})$, $C_B$}
   \EndIf   
   \If {$x$  corresponds to an order composition of $T$}
    \State \textbf{return}
       \Call{OrderComp}{$V(T_{x_{\ell}})$, $C_A$, $V(T_{x_{r}})$, $C_B$}
   \EndIf   
\EndIf
\end{algorithmic}
\normalsize
\caption{{\sc DirectedSteinerPathCover}(Co-Tree $T$, Vertex $x$)}  
\label{fig:algorithm3x}
\end{algorithm}

By algorithm {\sc DirecedSteinerPathCover} we obtain the following result.

\begin{theorem} \label{thSteinerPathCover}
A  directed Steiner path cover of minimum cost for a directed
co-graph can be computed in linear time with respect to the size of the
directed co-expression.
\end{theorem}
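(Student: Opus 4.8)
The plan is to prove both correctness and the linear running time of the recursive procedure {\sc DirectedSteinerPathCover} (Algorithm~\ref{fig:algorithm3x}) by structural induction along the binary directed co-tree $T(G)$. The invariant I would carry up the tree is that, for every subtree, the procedure returns an \emph{actual} directed Steiner path cover in normal form whose number of paths equals $p(\cdot)$ and whose number of Steiner vertices equals $s(\cdot)$, i.e.\ an optimal cover realizing the values computed in Lemma~\ref{le-d-p} and Lemma~\ref{le-d-s}. The decisive point — and the reason Lemma~\ref{L_dir_sum} was established first — is that this invariant is \emph{composable}: since $p(G)+s(G)\le|C|+\s(C)$ holds for every cover $C$ (together with Remark~\ref{r-s}), an optimal cover of a composite graph can always be assembled from optimal covers of its two factors, so it is legitimate for the recursion to hand up only optimal child covers rather than having to consider suboptimal ones; the non-co-graph counterexample of Figure~\ref{F_contra} shows that this composability is exactly what can fail in general.

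For the correctness step I would check, case by case, that each subroutine maps optimal inputs to an optimal output realizing the matching recurrence. The single-vertex base case is immediate. For $G=A\oplus B$ the returned $C_A\cup C_B$ has $|C_A|+|C_B|=p(A)+p(B)$ paths and $\s(C_A)+\s(C_B)=s(A)+s(B)$ Steiner vertices, matching Lemma~\ref{le-d-p}(\ref{L2_2}) and Remark~\ref{R_s}. For $G=A\otimes B$ with $|T_A|\le|T_B|$ I would verify that {\sc SeriesComp} performs precisely the transformations used in the proofs of Lemma~\ref{le-d-p}(\ref{L2_4}) and Lemma~\ref{le-d-s}(\ref{L3_3}): it merges the paths of the larger side $B$ using first the terminal and then the non-terminal vertices of $A$, and once a single path remains it absorbs the still-unused terminals of $A$ by replacing (and deleting) a Steiner vertex or by inserting between two terminals, while Theorem~\ref{T_strongNF} justifies discarding the cover of $A$ entirely. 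A direct count then gives $\max\{1,p(B)-|V(A)|\}$ paths and $\max\{0,s(B)+p(B)-p(A\otimes B)-|T_A|\}$ Steiner vertices, hence optimality. For $G=A\oslash B$ I would argue analogously that {\sc OrderComp} realizes Lemma~\ref{le-d-p}(\ref{L2_6})--(\ref{L2_7}) and Lemma~\ref{le-d-s}(\ref{L3_5})--(\ref{L3_7}): it first splits paths of the side with fewer paths at Steiner vertices (Remark~\ref{R_dir-s}), then concatenates a path of $A$ in front of each path of $B$ (Remark~\ref{R2_strongNF}), and appends the remaining paths unchanged.

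For the running time I would fix data structures so that every elementary operation is constant time: each path is a doubly linked list with pointers to its two endpoints and to one of its Steiner vertices (plus a Steiner-vertex counter), and each cover is a doubly linked list of such paths carrying the maintained counters $|C|$ and $\s(C)$. Then {\sc Append}, {\sc Combine}, {\sc Concat}, {\sc Replace}, {\sc Insert}, and the split of a path at a Steiner vertex are each $\bigo(1)$, and the disjoint union is a single $\bigo(1)$ splice of two path lists. Since the directed co-tree has $\bigo(|V(G)|)$ nodes and is computable in linear time \cite{CP06}, it remains only to bound the \emph{total} number of elementary operations by $\bigo(|V(G)|)$.

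This amortized bound is the main obstacle, although most of it is routine. Each {\sc Combine}/{\sc Concat} decreases by one the total number of paths present across all covers, whereas each split or {\sc Replace} permanently deletes a distinct non-terminal (Steiner) vertex from the instance — deleted Steiner vertices are never re-introduced, because the sole reconstruction step (the equal-terminal branch of {\sc SeriesComp}) rebuilds a path solely from terminals. Hence the numbers of splits and of replaces are each $\bigo(|V(G)|)$, and since the number of paths ever created equals (terminal leaves) plus (splits) $=\bigo(|V(G)|)$, the number of merges is $\bigo(|V(G)|)$ as well. What remains are the one-at-a-time placements {\sc Append} and {\sc Insert}: a fixed terminal is placed once when it first enters a path and re-placed only when some ancestor triggers the equal-terminal reconstruction branch, so the total number of placements is $\bigo\!\left(|V(G)|+\sum_x t(x)\right)$, where the sum ranges over the reconstruction nodes $x$ and $t(x)$ is the number of terminals below $x$. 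The entire linear bound therefore hinges on $\sum_x t(x)=\bigo(|V(G)|)$. I would secure this by using a left-deep binarization of every series node and the observation that, along any root-to-leaf spine, consecutive reconstruction nodes at least double their terminal count, so the sum telescopes geometrically to $\bigo(|V(G)|)$. I expect making this doubling estimate precise to be the step that needs the most care.
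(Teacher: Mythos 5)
Your correctness argument is sound and is essentially the expanded version of the paper's (the paper settles correctness in one sentence by citing Lemmas~\ref{le-d-p} and \ref{le-d-s}; your induction, with Lemma~\ref{L_dir_sum} supplying composability, is the same route spelled out). The genuine gap is the final step of your running-time analysis: the claim that $\sum_x t(x)=\bigo(|V(G)|)$ over reconstruction nodes does \emph{not} follow from the doubling observation. Doubling along a single root-to-leaf chain bounds that chain's sum geometrically by its topmost term, but distinct chains share all their upper nodes; what doubling actually yields is that each terminal has $\bigo(\log n)$ reconstruction ancestors, hence only $\sum_x t(x)=\bigo(n\log n)$, and this is tight. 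Concretely, let all vertices be terminal and set $H_0=\bullet_v$ and $H_k=\bigl(H_{k-1}\oplus H_{k-1}\bigr)\otimes\bigl(H_{k-1}\oplus H_{k-1}\bigr)$. Every $\otimes$-node satisfies $|T_A|=|T_B|$ and thus triggers the equal-terminal rebuild, no two $\otimes$-nodes are adjacent in the co-tree (every child of a $\otimes$ is an $\oplus$), so your left-deep binarization of series chains cannot merge any of them, and the cost recursion $S(n)=n+4\,S(n/4)$ gives $S(n)=\Theta(n\log n)$. Since these are actual {\sc Append} operations executed by Algorithm~\ref{fig:algorithm3xa} as written, no charging scheme can repair the bound: within your framework (per-vertex placements at each reconstruction node) the algorithm genuinely runs in $\Theta(n\log n)$ on this family, not linear time.

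A fix requires changing the merge, not the accounting: when the remaining minority-side terminals can no longer be paired with a path-count decrease ({\sc Combine}) or a Steiner-vertex deletion ({\sc Replace}), one must splice whole surviving path segments of $C_A$ into the cover, one cross edge per splice --- legitimate because an optimal cover need not be in normal form and all cross edges exist in $A\otimes B$ --- so that every elementary operation is amortized against a path-count decrease, a Steiner-vertex death, or a path creation, giving a true $\bigo(n)$ total. (Note that {\sc SeriesComp} as printed never uses $C_A$ at all, which is exactly why it pays per vertex.) It is worth saying that the paper's own proof is silent on this point: it argues only that each elementary operation is $\bigo(1)$ with suitable doubly-linked lists and never bounds the \emph{number} of operations, so your attempt, although it fails at the telescoping step, correctly isolates the obstacle that the published proof glosses over.
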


\begin{proof}
The correctness follows by Lemma \ref{le-d-p} and \ref{le-d-s}.
The running time can be achieved by storing the paths
using double-linked, linear lists, where the paths that
contain Steiner vertices are stored in one set and the paths that
contain no Steiner vertices are stored in another set. The lists each
have a pointer to the first and last element, which are terminal
vertices, and they have a pointer to the first and last Steiner vertices.
Additionally, we store the number of terminal and Steiner vertices for
each list.
Each of the operations in Algorithm {\sc DirectedSteinerPathCover}
can be done in constant time.
\end{proof}

\section{Hamiltonian Path Problem} \label{sec_hp}

Our motivation to study the Hamiltonian Path problem comes from the
problem of minimizing change-over times in pick-and-place machines in
electronic industry. Before we study this problem, we recall the definition
of Hamiltonian Path problem.

\begin{desctight}
\item[Name] Shortest Directed Hamiltonian Path

\item[Instance] A directed graph $G$ and edge weights
  $c:  E(G) \to \RR^{\geq 0}$.

\item[Task] Find a directed Hamiltonian path $P$ in $G$ that minimizes
  $\sum_{e\in P} c(e)$.
\end{desctight}

Pick-and-place machines have been studied for many years, see \cite{CKS02}.
Often, the focus of planning problems in printed circuit board assembly
is on optimizing the throughput of these pick-and-place machines
to produce many printed circuit boards (short PCB) a day. We are
interested in minimizing change-over times for refilling the machines.

Mid-sized companies in the electronic industry often have to produce
different PCBs on one day. Each of these PCBs has
to be produced in small quantities. The SMD components like capacitors,
resistors, or integrated circuits are usually positioned on the boards
by pick-and-place machines and are soldered in a re-flow oven. Different
types of components are fed in to the machine by feeders or trays.
After one type of boards has been assembled, the pick-and-place
machine has to be reassembled with other types of SMD components. The
time for this refilling of components is often limited by the time the
last PCBs of the previous series stay in the re-flow oven.
The change-over must not take too long to avoid an unnecessary
downtime of the machine. Therefore, PCBs must be processed
in an order that only a few component groups have to be replaced.
We show how a graph model can be used to describe the problem, and we
give an integer program to solve the problem.

Assume that a company produces different PCBs, and the available
pick-and-place machine can be equipped with at most $k$ different types of
SMD components, since no more trays are available. At one day,
$n$ different PCBs have to be produced, that together contain
$m$ different types of SMD components $t_1,\ldots,t_m$. Each board $b_i$
has to be equipped with a set $T_i \subset \{t_1,\ldots,t_m\}$ of component
types. We represent each board by an $m$-tuple such that the $i$-th
entry is one if and only if type $t_i$ is needed to produce the board.
The number of ones in such a tuple is at most $k$.
For example, $b_i = (0,1,0,1,0)$ means that components of type 2 and 4
must be used, types 1, 3, and 5 are not used for producing this board.

In our graph model, each board is represented by a vertex, and two vertices
are connected by an edge. To change over from a board $u$ to a board $v$
those types have to be removed from the machine, that are used by $u$ but
not used by $v$. Those types, that are not used by $u$ but are used by $v$
have to be inserted in to the machine. Thus, the cost of change over is
described by the hamming distance of $u$ and $v$.

\begin{example}
Given $n=4$ jobs $b_1=(1,0,0,0)$, $b_2=(0,1,0,1)$, $b_3=(1,0,1,0)$, and
$b_4=(1,1,1,0)$. The resulting graph contains the artificial vertex
$b_0=(0,0,0,0)$ to represent the initial state of the machine, where all
trays are empty. It is shown in Figure~\ref{f_ex} together with the edge
costs. To minimize the overall change-over times is to produce the boards
in the order $b_1=(1,0,0,0)$, $b_3=(1,0,1,0)$, $b_4=(1,1,1,0)$, and
$b_2=(0,1,0,1)$ which sum up to 6.
\end{example}

\begin{figure}[hbtp]
\centerline{\epsfxsize=.6\textwidth \epsfbox{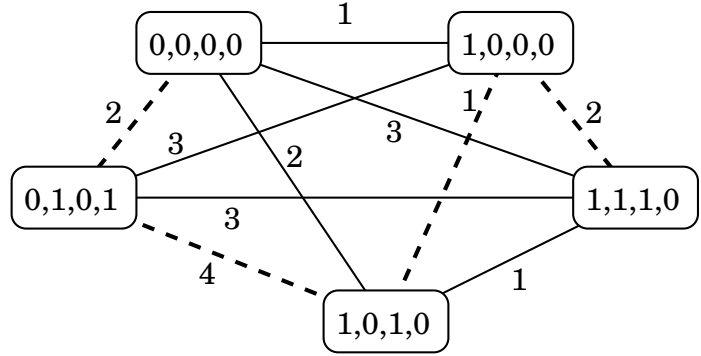}}
\caption{Graph modeling 4 boards. The dashed lines show a Hamiltonian path
  that defines an order in which the change-over times sum up to 9.}
\label{f_ex}
\end{figure}

A shortest Hamiltonian path that starts at the artificial vertex representing
the initial state of the machine minimizes the overall change-over time. In
the next section we give a mixed integer program to solve the problem.

\subsection{Integer Program}

In the following we assume that a directed graph $G=(V,E,c)$ is given,
where $V$ denotes the set of vertices, $E$ denotes the set of edges, and
$c$ denotes a function that assigns the cost $c(e)$ to edge $e$. We
represent an undirected graph $G'$ by a directed graph $G$ in such a way
that each edge $e$ of $G'$ is represented by two anti-parallel edges in $G$,
and the weight of the new edges are equal to the weight of edge $e$.

Our integer program is based on the following idea. Choose a vertex as
the first vertex on the path. This is typically an artificial vertex
that represents the empty machine at the beginning of a day. The order
of vertices on the path implies for two vertices $u$ and $v$ following
immediately on the path, that the cost of edge $(u,v)$ represents
the change-over time.

Our integer program uses binary variables $x[i,j]$, $i,j \in [n]$, such
that $x[i,j]$ is equal to one, if and only if edge $(i,j)$ is on the
Hamiltonian path. We use integer valued variables $p[i]$, $i \in [n]$,
such that $p[i] = \ell$, if and only if vertex $i$ is the $\ell$-th
vertex on the Hamiltonian path.

The first constraints ensure that the paths are all vertex-disjoint.
The path leaves each vertex $i$ at most once, and the path enters each
vertex $j$ at most once, which can be stated as follows.
\[
    \sum_{(i,j) \in E} x[i,j] \leq 1 ~~ \forall i \in V
    \quad \quad \mbox{and} \quad \quad
    \sum_{(i,j) \in E} x[i,j] \leq 1 ~~ \forall j \in V
\]
The following constraints prevent loops over anti-parallel edges. If there are anti-parallel
edges then choose at most one of them. This is ensured by the constraints:
\[
    x[i,j] \leq 1 - x[j,i] ~~ \forall (i,j) \in E: (j,i) \in E
\]
Each Hamiltonian path has length $n-1$:
\[
    \sum_{(i,j) \in E} x[i,j] = n-1
\]
With just this constraint on the number of edges, we may accept
solutions that are not connected. It is still possible to have cycles
and a path instead of one path only. The next constraints are used to
avoid cycles. If edge $(i,j)$ belongs to the path then $p[j] - p[i] = 1$
must hold true:
\[
\begin{array}{crcl}
    & -n  + (n+1) \cdot x[i,j] & \leq & p[j] - p[i] ~~ \forall (i,j) \in E \\
  \mbox{and} &
    p[j] - p[i] & \leq & n - (n-1) \cdot x[i,j] ~~ \forall (i,j) \in E
\end{array}
\]
The objective is to minimize the sum of the change-over times, i.e.\ to
make the path length as small as possible, so we have to minimize the
objective function $\sum_{(i,j) \in E} c[i,j] \cdot x[i,j]$.
Alternatively, we can use $\max_{(i,j) \in E} c[i,j] \cdot x[i,j]$ as
objective function to minimize the largest change-over time.

\subsection{Experimental results}

We ran the mixed integer program on CPLEX 12.8.0 using default settings
on some standard PC with Intel i7, 2.80 GHz CPU and 16 GB RAM running
Ubuntu 18.10.

Running times to solve the Hamiltonian Path problem on randomly generated
tournaments with randomly generated edge weights are shown in
Table \ref{tab_erg_HP}. The running times are average values over 10 runs.
Bold numbers represent times when the time limit of 1800 seconds, which
is equal to 30 minutes, was reached at least once and at most twice.
It is well-known that any tournament on a finite number of vertices
contains a Hamiltonian path.

\begin{table}[hbtp]
\[
  \begin{array}{|r||r|r|r|r|r|}
    \hline
    n  & 100 & 200 & 300 & 400 & 500 \\
    \hline \hline
    \text{time}  & 2{.}1 & 22{.}1 & 69{.}3 & {\bf 519} & {\bf 783} \\
    \text{stdev} & 0{.}7 & 19{.}1 & 81{.}6 & 741 & 885 \\
    \hline
  \end{array}
\]
\caption{Running times of MIP to calculate Hamiltonian Path on randomly
  generated tournaments.}
\label{tab_erg_HP}
\end{table}

Running times to solve the Hamiltonian Path problem on complete bipartite
graphs $K_{n,m}$ for different values of $n$ and $m$ are shown in Table
\ref{tab_erg_HP2}. The complete
bipartite graph $K_{n,m}$ contains a Hamiltonian path if and only if
$|n-m| \leq 1$. The absence of such a path can be detected very quickly
by the MIP as can be seen in the columns $m=n+2$.
All complete bipartite graphs are co-graphs so these results show that
our algorithm with running time linear in the input size would be very
much faster than the integer program.
Running times of more than 30 minutes are indicated by a bar $-$.

\begin{table}[hbtp]
\[
  \begin{array}{|r||r|r|r||r|r|r||r|r|r||r|r|r||r|r|r|}
    \hline
     & \multicolumn{3}{c||}{n = 50} & \multicolumn{3}{c||}{n = 100} &
       \multicolumn{3}{c||}{n = 150} & \multicolumn{3}{c||}{n = 200} &
       \multicolumn{3}{c|}{n = 250} \\
   \hline
   m & 50 & 51 & 52 & 100 & 101 & 102 & 150 & 151 & 152 &
      200 & 201 & 202 & 250 & 251 & 252 \\
   \hline \hline
   time & 1{.}7 & 3{.}2 & 0{.}1 & 16{.}7 & 10{.}9 & 0{.}1 &
       131 & 366 & 0{.}2 & - & 370 & 0{.}3 & - & 1774 & 0{.}5 \\
   \hline
  \end{array}
\]
\caption{Running times of MIP to calculate Hamiltonian Path on complete
  bipartite graphs $K_{n,m}$ for different values of $n$ and $m$.}
\label{tab_erg_HP2}
\end{table}

\section{Steiner Path Problem}


\subsection{Integer Program}

Here again, we assume that a directed graph $G=(V,E,c)$ is given,
where $V$ denotes the set of vertices, $E$ denotes the set of edges, and
$c$ denotes a function that assigns the cost $c(e)$ to edge $e$. We
represent an undirected graph $G'$ by a directed graph $G$ in such a way
that each edge $e$ of $G'$ is represented by two anti-parallel edges in $G$,
and the weight of the new edges are equal to the weight of edge $e$.

We use additional binary variables $y[i]$, $i \in [n]$, such that
$y[i]$ is equal to one, if and only if vertex $i$ is contained in the
path. These variables are only used to simplify the formulation of
the following constraints \ref{constr_1} and \ref{constr_2}.
As in the last chapter, our integer program uses
binary variables $x[i,j]$, $i,j \in [n]$, such that $x[i,j]$ is equal
to one, if and only if edge $\{i,j\}$ is on the Steiner path.
We use integer valued variables $p[i]$, $i \in [n]$, such that
$p[i] = \ell$, if and only if vertex $i$ is the $\ell$-th
vertex on the Steiner path.
Each terminal vertex $i$ has to be contained in the path.
\[
    p[i] \geq 1 ~~ \forall i \in T
\]
If $x[i,j] = 1$ then $i$ and $j$ lie on the path, i.e.\ if $x[i,j] = 1$
then $p[i] \neq 0$ and $p[j] \neq 0$ which can be stated by the following
constraints.
\[
    x[i,j] \leq p[i] ~~ \forall (i,j) \in E
    \quad \quad \mbox{ and } \quad \quad
    x[i,j] \leq p[j] ~~ \forall (i,j) \in E
\]
If there are anti-parallel edges then choose at most one of them. This is
ensured by the constraint:
\[
    x[i,j] \leq 1 - x[j,i] ~~ \forall (i,j) \in E: (j,i) \in E
\]
It holds $y[i] = 1$ iff $p[i] > 0$, i.e. $y[i] = 1$ iff vertex $i$ is on
the path.
\[
   y[i] \leq p[i] ~~ \forall i \in V
    \quad \quad \mbox{ and } \quad \quad
   p[i] \leq n \cdot y[i] ~~ \forall i \in V
\]
The path has to contain one edge less than vertices.
\begin{eqnarray}
    \sum_{(i,j) \in E} x[i,j] & = & \sum_{i \in V} y[i] - 1 \label{constr_1}
\end{eqnarray}
If $p[i] \neq 0$ then there has to be a $x[i,j] = 1$ or $x[j,i] = 1$.
\[
    p[i] \leq n \cdot \left(\sum_{(i,j) \in E} x[i,j] +
                            \sum_{(j,i) \in E} x[j,i]\right)
    ~~ \forall i \in V
\]
For each vertex $i \in V-T$ contained in the path we have one in-going
and one outgoing edge.
\begin{eqnarray}
  \sum_{(i,j) \in E} x[i,j] = y[i] ~~ \forall i \in V-T
    & \mbox{ and } &
  \sum_{(i,j) \in E} x[i,j] = y[i] ~~ \forall j \in V-T \label{constr_2}
\end{eqnarray}
The path leaves each vertex $i \in T$ at most once and it enters each
vertex $j \in T$ at most once.
\[
    \sum_{(i,j) \in E} x[i,j] \leq 1 ~~ \forall i \in T
    \quad \quad \mbox{ and } \quad \quad
    \sum_{(i,j) \in E} x[i,j] \leq 1 ~~ \forall j \in T
\]
If a vertex is in the path then its position must be at least one.
\[
    \sum_{(i,j) \in E} x[i,j] \leq p[i] ~~ \forall i \in V
    \quad \quad \mbox{ and } \quad \quad
    \sum_{(i,j) \in E} x[i,j] \leq p[j] ~~ \forall j \in V
\]
Finally, to avoid cycles, if edge $(i,j)$ belongs to the path then
$p[j] - p[i] = 1$ must hold true:
\[
\begin{array}{crcl}
    & -n  + (n+1) \cdot x[i,j] & \leq & p[j] - p[i] ~~ \forall (i,j) \in E \\
  \mbox{and} &
    p[j] - p[i] & \leq & n - (n-1) \cdot x[i,j] ~~ \forall (i,j) \in E
\end{array}
\]
The objective is to minimize the sum of edge-weights
$\sum_{(i,j) \in E} c[i,j] \cdot x[i,j]$. For unit-distance graphs
this objective function results in a Steiner path with least number
of Steiner vertices.

\subsection{Experimental results}

Running times to solve Steiner Path problems on complete bipartite graphs
$K_{n,3n}$ for different values of $n$ and $t$ many randomly chosen terminal
vertices are shown in Table \ref{tab_erg_StP}. As long as at most $n$
vertices from the second set are selected
as terminal vertices, a Steiner path exists.
The absence of such a path can be detected very quickly
by the MIP as can be seen in the columns $t=2n$.

\begin{table}[hbtp]
\[
  \begin{array}{|r||r|r|r||r|r|r||r|r|r||r|r|r||r|r|r|}
    \hline
     & \multicolumn{3}{c||}{n = 25} & \multicolumn{3}{c||}{n = 50} &
       \multicolumn{3}{c||}{n = 75} & \multicolumn{3}{c||}{n = 100} &
       \multicolumn{3}{c|}{n = 125} \\
   \hline
   t & 12 & 25 & 50 & 25 & 50 & 100 & 37 & 75 & 150 &
       50 & 100 & 200 & 62 & 125 & 250 \\
   \hline \hline
   \text{time} & 0{.}9 & 1{.}2 & 0{.}1 & 7{.}2 & 15{.}5 & 0{.}2 &
       29{.}7 & 139 & 0{.}4 & 189 & {\bf 945} & 0{.}7 &
       {\bf 643} & - & 1{.}1 \\
   \hline
  \end{array}
\]
\caption{Running times to solve Steiner Path problems on complete
  bipartite graphs $K_{n,3n}$ for different values of $n$ and $t$}
\label{tab_erg_StP}
\end{table}

The running times are average values over 10 runs,
bold numbers represent times when the time limit of 1800 seconds, which
is equal to 30 minutes, was reached at least once and at most twice.
Running times of more than 30 minutes are indicated by a bar $-$.

\section{Steiner Path Cover Problem}


\subsection{Integer Program}

To simplify the objective function and the IP we
modify the given directed graph $G=(V,E,c)$ in such a way,
that a new vertex $s$ as source and another new vertex $t \neq s$
as sink is added. Vertex $s$ gets edges to each terminal vertex
and each terminal vertex gets an edge to the sink $t$ as
shown in Figure~\ref{f_ex2}.
\begin{figure}
\centerline{\epsfxsize=.5\textwidth \epsfbox{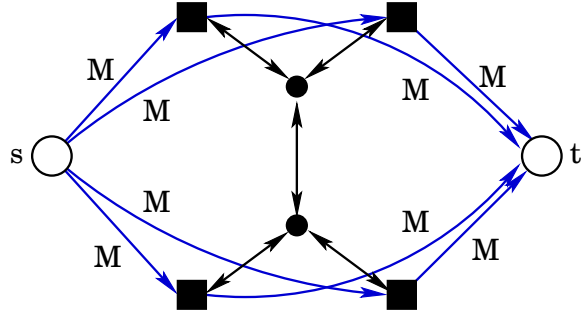}}
\caption{Graph extended by a source and a sink to compute an optimal
  Steiner Path Cover.}
\label{f_ex2}
\end{figure}
The additional edges get a weight of $M := |V| \cdot |E|$ if the graph
has no edge weights, otherwise $M := |V| \cdot C$ where
$C := \max_{e \in E}\{ c(e) \}$ denotes the maximum edge
weight of the graph. More formally, we get graph $G'=(V',E',c')$
such that $V' := V \cup \{s,t\}$, $E' := E \cup \{(s,v) \mid v \in T\}
\cup \{(v,t) \mid v \in T\}$, and $c'(e) = c(e)$ for each $e \in E$ and
$c'(e) = M$ if $e=(s,v)$ or $c=(v,t)$ for some $v \in T$. If graph $G$
has no edge weights then $c'(e) = 1$ for each $e \in E$, if $G$ is an
undirected graph we replace each edge $\{u,v\} \in E$ by two anti-parallel
edges $(u,v)$ and $(v,u)$.

As in the last chapter, our integer program uses binary variables
$y[i]$, $i \in [n]$, such that $y[i]$ is equal to one, if and only
if vertex $i$ is contained in the path. Binary variables
$x[i,j]$, $i,j \in [n]$, are used such that $x[i,j]$ is equal
to one, if and only if edge $\{i,j\}$ is on the Steiner path.
We use integer valued variables $p[i]$, $i \in [n]$, such that
$p[i] = \ell$, if and only if vertex $i$ is the $\ell$-th
vertex on the Steiner path.

The number of paths in the cover is equal to $\sum_{(s,v) \in E'} x[s,v]$,
and since there is at least one path in a Steiner Path Cover for $G$
if $T \neq \emptyset$, the following conditions must hold true:
\[
    \sum_{(s,v) \in E'} x[s,v] > 0
    \quad \quad \mbox{ and } \quad \quad
    \sum_{(v,t) \in E'} x[v,t] > 0
\]
Because we have inserted artificial vertices, the path leaves each
terminal vertex $i$ exactly once, and the path enters each terminal
vertex $j$ exactly once, which can be stated as follows.
\[
    \sum_{(i,j) \in E'} x[i,j] = 1 ~~ \forall i \in T
    \quad \quad \mbox{and} \quad \quad
    \sum_{(i,j) \in E'} x[i,j] = 1 ~~ \forall j \in T
\]
Since non-terminal vertices need not be contained in the path, the
path leaves each non-terminal vertex $i$ at most once, and the path enters
each non-terminal vertex $j$ at most once, which can be stated as follows.
\[
    \sum_{(i,j) \in E} x[i,j] \leq 1 ~~ \forall i \in V-T
    \quad \quad \mbox{and} \quad \quad
    \sum_{(i,j) \in E} x[i,j] \leq 1 ~~ \forall j \in V-T
\]
The previous conditions also ensure that paths are vertex-disjoint.
If a path contains a non-terminal vertex $j$ then there has to be an
edge that enters $j$ and an edge that leaves $j$, which can be stated as:
\[
    \sum_{(i,j) \in E} x[i,j]  =  \sum_{(j,k) \in E} x[j,k]
     ~~ \forall j \in V-T
\]
If there are anti-parallel edges then choose at most one of them. This is
ensured by the constraint:
\[
    x[i,j] \leq 1 - x[j,i] ~~ \forall (i,j) \in E: (j,i) \in E
\]
Each terminal vertex $i$ has to be contained in the path.
\[
    p[i] \geq 1 ~~ \forall i \in T
\]
If $x[i,j] = 1$ then $i$ and $j$ lie on the path, i.e.\ if $x[i,j] = 1$
then $p[i] \neq 0$ and $p[j] \neq 0$ which can be stated by the following
constraints.
\[
    x[i,j] \leq p[i] ~~ \forall (i,j) \in E
    \quad \quad \mbox{ and } \quad \quad
    x[i,j] \leq p[j] ~~ \forall (i,j) \in E
\]
If $p[i] \neq 0$ then there has to be a $x[i,j] = 1$ or $x[j,i] = 1$.
\[
    p[i] \leq n \cdot \left(\sum_{(i,j) \in E} x[i,j] +
                            \sum_{(j,i) \in E} x[j,i]\right)
    ~~ \forall i \in V
\]
Finally, to avoid cycles, if edge $(i,j)$ belongs to the path then
$p[j] - p[i] = 1$ must hold true:
\[
\begin{array}{crcl}
    & -n  + (n+1) \cdot x[i,j] & \leq & p[j] - p[i] ~~ \forall (i,j) \in E \\
  \mbox{and} &
    p[j] - p[i] & \leq & n - (n-1) \cdot x[i,j] ~~ \forall (i,j) \in E
\end{array}
\]

Minimizing an objective function like $\sum_{v \in T} x[s,v]$ would only
minimize the number of paths. If there are two different path covers of
minimum size, we have to select that one with the least number of Steiner
vertices. By our definition of the costs of the additional edges from $s$ to
terminal vertices we can choose the same objective function as above, when
we also take care of the additional edges.
The objective is to minimize the sum of edge-weights
$\sum_{(i,j) \in E'} c[i,j] \cdot x[i,j]$. For unit-distance graphs this
objective function results in a Steiner path cover $P$ of minimum size,
with least number of Steiner vertices $\sum_{p \in P} \sum_{e \in p} c(e)$.

\subsection{Experimental results}

Running times to solve Steiner Path Cover problems on complete bipartite
graphs $K_{n,3n}$ for different values of $n$ and $t$ many randomly
chosen terminal vertices are shown in Table \ref{tab_erg_SPC}.

\begin{table}[hbtp]
\[
  \begin{array}{|r||r|r|r||r|r|r||r|r|r||r|r|r||r|r|r|}
   \hline
     & \multicolumn{3}{c||}{n = 25} & \multicolumn{3}{c||}{n = 50} &
       \multicolumn{3}{c||}{n = 75} & \multicolumn{3}{c||}{n = 100} &
       \multicolumn{3}{c|}{n = 125} \\
   \hline
   t & 12 & 25 & 50 & 25 & 50 & 100 & 37 & 75 & 150 &
       50 & 100 & 200 & 62 & 125 & 250 \\
   \hline \hline
   \text{time} & 0{.}9 & 1{.}4 & 1{.}1 & 7{.}4 & 9{.}1 & 7{.}4 &
       25{.}7 & 47{.}7 & 20{.}9 & 112 & 207 & 55{.}8 & 307 & - & 269 \\
   \hline
  \end{array}
\]
\caption{Running times to solve Steiner Path Cover problems on complete
  bipartite graphs $K_{n,3n}$ for different values of $n$ and $t$.}
\label{tab_erg_SPC}
\end{table}

The running times are average values over 10 runs.
Running times of more than 30 minutes are indicated by a bar $-$,
bold numbers represent times when the time limit
was reached at least once and at most twice.
Table \ref{tab_erg_SPC2} shows
running times to solve Steiner Path Cover problems on
randomly generated co-graphs with $n$ vertices and $t$
many randomly chosen terminal vertices.

\begin{table}[hbtp]
\[
  \begin{array}{|r||r|r|r||r|r|r||r|r|r||r|r|r|}
   \hline
     & \multicolumn{3}{c||}{n = 200} & \multicolumn{3}{c||}{n = 300} &
       \multicolumn{3}{c||}{n = 400} & \multicolumn{3}{c|}{n = 500} \\
   \hline
   t & n/5 & n/2 & 4n/5 & n/5 & n/2 & 4n/5 &
       n/5 & n/2 & 4n/5 & n/5 & n/2 & 4n/5 \\
   \hline \hline
   \text{time} & 9{.}3 & 13{.}4 & 19{.}7 & 21{.}8 & 72{.}9 & {\bf 337} &
         46{.}1 & 191 & {\bf 413} & 121 & {\bf 524} & {\bf 708} \\
   \hline
  \end{array}
\]
\caption{Running times to solve Steiner Path Cover problems on randomly
  generated co-graphs with $n$ vertices and $t$ many randomly chosen
  terminal vertices.}
\label{tab_erg_SPC2}
\end{table}

The co-graphs will be generated recursively as shown in Algorithm
\ref{algo_randCoGr}. The probability $p$ is set to be $\nicefrac{1}{3}$
in our experiments.

\begin{algorithm}[ht]
\begin{algorithmic}
\State choose $t$ distinct vertices at random
\State call \Call{otimes}{1, n}
\Function{otimes}{int l, int r}
  \If {r - l = 1}
    \State create edge $(l,r)$ and \textbf{return}
  \EndIf
  \State m := rand(l, r) \Comment{choose random number between l and r}
  \State call \Call{oplus}{l, m} with probability $p$, otherwise
    call \Call{otimes}{l, m}
  \State call \Call{oplus}{m+1, r} with probability $p$, otherwise
    call \Call{otimes}{m+1, r}
  \State create edges $(i,j)$ for all vertices $l \leq i \leq m$, and
    $m+1 \leq j \leq r$
\EndFunction
\Function{oplus}{int l, int r}
  \If {r - l = 1}
    \State \textbf{return}
  \EndIf
  \State m := rand(l, r)
  \State call \Call{oplus}{l, m} with probability $p$, otherwise
   call \Call{otimes}{l, m}
  \State call \Call{oplus}{m+1, r} with probability $p$, otherwise
    call \Call{otimes}{m+1, r}
\EndFunction
\end{algorithmic}
\caption{{\sc Random-Co-Graph}(int t, int n)}
\label{algo_randCoGr}
\end{algorithm}

\section{Conclusions and outlook}

In this paper we considered the directed Steiner path cover problem
for directed co-graphs. We could show a linear time solution for
computing the minimum number of paths within a directed Steiner path
cover and the minimum number of Steiner vertices in such a directed
Steiner path cover in directed co-graphs.  The results allowed us to 
give an algorithm which constructs a directed Steiner path cover of
minimum cost  for a directed co-graph in linear time. This leads to
a linear time computation of an optimal directed Steiner path, if it
exists, for directed co-graphs.

Undirected co-graphs are precisely those graphs which can be generated
from the single vertex graph by disjoint union and join operations,
see \cite{CLS81}.  Given some undirected co-graph $G$, we can solve the
Steiner path cover problem in linear time by replacing every edge 
$\{u,v\}$ of $G$  by two directed edges $(u,v)$ and $(v,u)$ and applying
our solution for directed co-graphs. This reproves our result of \cite{GHKRRW20a}.

The directed Hamiltonian path problem can be solved by an XP-algorithm
w.r.t.\ the parameter directed clique-width \cite{GHO13}. Since
directed co-graphs have directed clique-width at most two \cite{GWY16}
a polynomial time solution for the directed Hamiltonian path problem
follows. Such an algorithm is also given in \cite{Gur17a}.
Since a directed Hamiltonian path exists if and only if we have $T=V(G)$ and
$p(G)=1$, our results imply the first linear time algorithm for the 
directed Hamiltonian path problem on directed co-graphs. This
generalizes the known results for undirected co-graphs of
Lin et al.~\cite{LOP95}.

In our future work we want to find out whether results can be transferred
to other graph classes such as chordal graphs, interval graphs, or
distance-hereditary graphs.

\section*{Acknowledgements} \label{sec-a}

The work of the second and third author was supported
by the Deutsche
Forschungsgemeinschaft (DFG, German Research Foundation) -- 388221852



\begin{thebibliography}{NEMM{\etalchar{+}}18}

\bibitem[AACKS14]{ACKS14}
A.K. Abu-Affash, P.~Carmi, M.J. Katz, and M.~Segal.
\newblock The euclidean bottleneck steiner path problem and other applications
  of ($\alpha$,$\beta$)-pair decomposition.
\newblock {\em Discrete {\&} Computational Geometry}, 51(1):1--23, 2014.

\bibitem[BCKN15]{BCKN15}
H.L. Bodlaender, M.~Cygan, S.~Kratsch, and J.~Nederlof.
\newblock Deterministic single exponential time algorithms for connectivity
  problems parameterized by treewidth.
\newblock {\em Information and Computation}, 243:86--111, 2015.

\bibitem[BdGR97]{BGR97}
D.~Bechet, P.~de~Groote, and C.~Retor\'e.
\newblock A complete axiomatisation of the inclusion of series-parallel partial
  orders.
\newblock In {\em Rewriting Techniques and Applications}, volume 1232 of {\em
  LNCS}, pages 230--240. Springer-Verlag, 1997.

\bibitem[BJG18]{BG18}
J.~Bang-Jensen and G.~Gutin, editors.
\newblock {\em Classes of Directed Graphs}.
\newblock Springer-Verlag, Berlin, 2018.

\bibitem[BJM14]{BM14}
J.~Bang-Jensen and A.~Maddaloni.
\newblock Arc-disjoint paths in decomposable digraphs.
\newblock {\em Journal of Graph Theory}, 77:89--110, 2014.

\bibitem[CL18]{CL18}
A.~Custic and S.~Lendl.
\newblock On streaming algorithms for the steiner cycle and path cover problem
  on interval graphs and falling platforms in video games.
\newblock {\em ACM Computing Research Repository (CoRR)}, abs/1802.08577:9
  pages, 2018.

\bibitem[CLSB81]{CLS81}
D.G. Corneil, H.~Lerchs, and L.~Stewart-Burlingham.
\newblock Complement reducible graphs.
\newblock {\em Discrete Applied Mathematics}, 3:163--174, 1981.

\bibitem[CMZ12]{CMZ12}
M.~Chimani, P.~Mutzel, and B.~Zey.
\newblock Improved steiner tree algorithms for bounded treewidth.
\newblock {\em Journal of Discrete Algorithms}, 16:67--78, 2012.

\bibitem[CP06]{CP06}
C.~Crespelle and C.~Paul.
\newblock Fully dynamic recognition algorithm and certificate for directed
  cographs.
\newblock {\em Discrete Applied Mathematics}, 154(12):1722--1741, 2006.

\bibitem[CvdKS02]{CKS02}
Y.~Crama, J.~van~de Klundert, and F.C.R. Spieksma.
\newblock Production planning problems in printed circuit board assembly.
\newblock {\em Discrete Applied Mathematics}, 123(1-3):339--361, 2002.

\bibitem[FGLS10]{FGLS10a}
F.V. Fomin, P.A. Golovach, D.~Lokshtanov, and S.~Saurabh.
\newblock Intractability of clique-width parameterizations.
\newblock {\em SIAM Journal on Computing}, 39(5):1941--1956, 2010.

\bibitem[GHK{\etalchar{+}}14]{GHKLOR14}
R.~Ganian, P.~Hlinen{\'y}, J.~Kneis, A.~Langer, J.~Obdrz{\'a}lek, and
  P.~Rossmanith.
\newblock Digraph width measures in parameterized algorithmics.
\newblock {\em Discrete Applied Mathematics}, 168:88--107, 2014.

\bibitem[GHK{\etalchar{+}}20a]{GHKRRW20}
F.~Gurski, S.~Hoffmann, D.~Komander, C.~Rehs, J.~Rethmann, and E.~Wanke.
\newblock Computing directed steiner path covers for directed co-graphs.
\newblock In {\em Proceedings of the Conference on Current Trends in Theory and
  Practice of Computer Science (SOFSEM)}, volume 12011 of {\em LNCS}, pages
  556--565. Springer-Verlag, 2020.

\bibitem[GHK{\etalchar{+}}20b]{GHKRRW20a}
F.~Gurski, S.~Hoffmann, D.~Komander, C.~Rehs, J.~Rethmann, and E.~Wanke.
\newblock Exact solutions for the steiner path problem on special graph
  classes.
\newblock In {\em Operations Research Proceedings (OR 2019), Selected Papers},
  pages 331--338. Springer-Verlag, 2020.

\bibitem[GHO13]{GHO13}
R.~Ganian, P.~Hlinen{\'y}, and J.~Obdrz{\'a}lek.
\newblock A unified approach to polynomial algorithms on graphs of bounded
  (bi-)rank-width.
\newblock {\em European Journal of Combinatorics}, 34(3):680--701, 2013.

\bibitem[GKR19a]{GKR19f}
F.~Gurski, D.~Komander, and C.~Rehs.
\newblock Computing digraph width measures on directed co-graphs.
\newblock In {\em Proceedings of International Symposium on Fundamentals of
  Computation Theory (FCT)}, volume 11651 of {\em LNCS}, pages 292--305.
  Springer-Verlag, 2019.

\bibitem[GKR19b]{GKR19d}
F.~Gurski, D.~Komander, and C.~Rehs.
\newblock Oriented coloring on recursively defined digraphs.
\newblock {\em Algorithms}, 12(4):87, 2019.

\bibitem[GKR20]{GKR20c}
F.~Gurski, D.~Komander, and C.~Rehs.
\newblock Acyclic coloring of special digraphs.
\newblock {\em ACM Computing Research Repository (CoRR)}, abs/2006.13911:16
  pages, 2020.

\bibitem[GKR21a]{GKR21}
F.~Gurski, D.~Komander, and C.~Rehs.
\newblock Acyclic coloring parameterized by directed clique-width.
\newblock In {\em Proceedings of the International Conference on Algorithms and
  Discrete Applied Mathematics (CALDAM)}, LNCS. Springer-Verlag, 2021.
\newblock to appear.

\bibitem[GKR21b]{GKR21a}
F.~Gurski, D.~Komander, and C.~Rehs.
\newblock How to compute digraph width measures on directed co-graphs.
\newblock {\em Theoretical Computer Science}, 2021.
\newblock to appear.

\bibitem[GR18]{GR18c}
F.~Gurski and C.~Rehs.
\newblock Directed path-width and directed tree-width of directed co-graphs.
\newblock In {\em Proceedings of the International Conference on Computing and
  Combinatorics (COCOON)}, volume 10976 of {\em LNCS}, pages 255--267.
  Springer-Verlag, 2018.

\bibitem[Gur17]{Gur17a}
F.~Gurski.
\newblock Dynamic programming algorithms on directed cographs.
\newblock {\em Statistics, Optimization and Information Computing}, 5:35--44,
  2017.

\bibitem[GWY16]{GWY16}
F.~Gurski, E.~Wanke, and E.~Yilmaz.
\newblock Directed {NLC}-width.
\newblock {\em Theoretical Computer Science}, 616:1--17, 2016.

\bibitem[HSW17]{HSW17}
M.~Hellmuth, P.F. Stadler, and N.~Wieseke.
\newblock The mathematics of xenology: di-cographs, symbolic ultrametrics,
  2-structures and tree-representable systems of binary relations.
\newblock {\em Journal of Mathematical Biology}, 75(1):199--237, 2017.

\bibitem[LOP95]{LOP95}
R.~Lin, S.~Olariu, and G.~Pruesse.
\newblock An optimal path cover algorithm for cographs.
\newblock {\em Comput. Math. Appl.}, 30:75--83, 1995.

\bibitem[MJV13]{MJV13}
S.S. Moharana, A.~Joshi, and S.~Vijay.
\newblock Steiner path for trees.
\newblock {\em International Journal of Computer Applications}, 76(5):11--14,
  2013.

\bibitem[NEMM{\etalchar{+}}18]{NEMWH18}
N.~Nojgaard, N.~El-Mabrouk, D.~Merkle, N.~Wieseke, and M.~Hellmuth.
\newblock Partial homology relations - satisfiability in terms of di-cographs.
\newblock In {\em Proceedings of International Computing and Combinatorics
  Conference (COCOON)}, volume 10976 of {\em LNCS}, pages 403--415.
  Springer-Verlag, 2018.

\bibitem[RW90]{RW90}
G.~Reich and P.~Widmayer.
\newblock Beyond steiner's problem: a {VLSI} oriented generalization.
\newblock In {\em Proceedings of Graph-Theoretical Concepts in Computer Science
  (WG)}, volume 411 of {\em LNCS}, pages 196--210. Springer-Verlag, 1990.

\bibitem[WC82]{WC82}
J.A. Wald and C.J. Colbourn.
\newblock Steiner trees in outerplanar graphs.
\newblock In {\em Thirteenth Southeastern Conference on Combinatorics, Graph
  Theory, and Computing}, pages 15--22, 1982.

\bibitem[WC83]{WC83}
J.A. Wald and C.J. Colbourn.
\newblock Steiner trees, partial 2-trees, and minimum {IFI} networks.
\newblock {\em Networks}, 13:159--167, 1983.

\bibitem[WY95]{WY95}
J.~Westbrook and D.~Yan.
\newblock Approximation algorithms for the class steiner tree problem.
\newblock {R}esearch {R}eport, 1995.

\end{thebibliography}

\newcommand{\etalchar}[1]{$^{#1}$}

\end{document}